\title{Matching Statistics speed up BWT construction} 
\author{Francesco Masillo}{Department of Computer Science, University of Verona, Italy}{francesco.masillo@univr.it}{https://orcid.org/0000-0002-2078-6835}{}
\authorrunning{Francesco Masillo} 
\keywords{Burrows-Wheeler Transform, matching statistics, string collections, compressed representation, data structures, efficient algorithms} 
\tikzset{near start abs/.style={xshift=1cm},
	node/.style={circle,draw},
	nodeone/.style={circle,draw,gray, line width=0.7mm},
	nodered/.style={circle,draw,red, line width=0.7mm}}
\newcommand{\BWT}{\text{BWT}}
\newcommand{\Oh}{{\cal O}}
\newcommand{\Ch}{{\cal C}}
\newcommand{\SA}{\textit{SA}}
\newcommand{\ISA}{\textit{ISA}}
\newcommand{\LCP}{\textit{LCP}}
\newcommand{\RMQ}{\textit{RMQ}}
\newcommand{\MS}{\textit{MS}}
\newcommand{\Suf}{\textit{suf}}
\newcommand{\Pref}{\textit{pref}}
\newcommand{\PSV}{\textit{PSV}}
\newcommand{\NSV}{\textit{NSV}}
\newcommand{\pred}{\textit{pred}}
\newcommand{\CMS}{\textit{CMS}}
\newcommand{\eCMS}{\textit{eCMS}}
\newcommand{\lcp}{\textit{lcp}}
\newcommand{\ems}{\textit{ems}}
\newcommand{\ip}{\textit{ip}}
\newcommand{\head}{\text{head}}
\newcommand{\ihead}{\text{insert-head}}
\newcommand{\bigBWT}{{\tt big}-{\tt BWT}}
\newcommand{\CMSBWT}{{\tt CMS}-{\tt BWT}}
\newcommand{\grlBWT}{{\tt grlBWT}}
\newcommand{\rPFP}{{\tt r}-{\tt pfBWT}}
\newcommand{\ropeBWT}{{\tt ropeBWT2}}
\newcommand{\libsais}{{\tt libsais}}
\definecolor{darkred}{RGB}{128,0,0}
\definecolor{darkgreen}{RGB}{0,128,0}
\definecolor{lightgreen}{RGB}{224,255,224}
\definecolor{darkblue}{RGB}{0,0,128}
\definecolor{lightblue}{RGB}{224,244,255}
\newcommand{\colorpar}[3]{\colorbox{#1}{\parbox{#2}{#3}}}
\newcommand{\marginremark}[3]{\marginpar{\colorpar{#2}{\linewidth}{\color{#1}#3}}}
\newcommand{\remarkF}[1]{\marginremark{blue}{lightblue}{\tiny{[F]~ #1}}}
\renewcommand{\epsilon}{\varepsilon}
\begin{document}

\maketitle

\begin{abstract}
Due to the exponential growth of genomic data, constructing dedicated data structures has become the principal bottleneck in common bioinformatics applications. In particular, the Burrows-Wheeler Transform (\BWT) is the basis of some of the most popular self-indexes for genomic data, due to its known favourable behaviour on repetitive data.

Some tools that exploit the intrinsic repetitiveness of biological data have risen in popularity, due to their speed and low space consumption. 
We introduce a new algorithm for computing the $\BWT$, which takes advantage of the redundancy of the data through a compressed version of matching statistics, the $\CMS$ of [Lipt{\'a}k et al., WABI 2022]. We show that it suffices to sort a small subset of suffixes, lowering both computation time and space. Our result is due to a new insight which links the so-called insert-heads of [Lipt{\'a}k et al., WABI 2022] to the well-known run boundaries of the $\BWT$.

We give two implementations of our algorithm, called \CMSBWT, both competitive in our experimental validation on highly repetitive real-life datasets. In most cases, they outperform other tools w.r.t.\ running time, trading off a higher memory footprint, which, however, is still considerably smaller than the total size of the input data.

\end{abstract}

\section{Introduction}\label{sec:introduction}
The Burrows-Wheeler Transform ($\BWT$)~\cite{BurrowsWheeler94} is a reversible permutation of the characters of the input text that can be computed in linear time and space. It is closely related to the suffix array, a permutation of the indices of $T$ which is based on the lexicographic order of the suffixes. It is known that the \BWT, when applied to repetitive texts, results in an easier-to-compress string, especially when using simple run-length encoding. 

Another virtue of the $\BWT$ is that it can be used as a self-index to replace the original text. It can support pattern matching queries, more specifically, it counts how many times a pattern $P$ occurs in $T$ using time proportional to $|P|$. It can be incorporated into more elaborate indexes~\cite{FerraginaM00, GagieNP20, GiulianiR022, RossiOLGB22} to support locating queries (finding the positions in $T$ where $P$ occurs) and more complex queries such as finding Maximal Exact Matches (MEMs) and Maximal Unique Matches (MUMs). MEMs and MUMs are of key importance, especially in the field of bioinformatics, where they are used for read alignment (e.g.\ MUMmer~\cite{MarcaisDPCSZ18}). Widely used tools such as BowTie2~\cite{LangmeadS12} and BWA~\cite{LiD10} are based on the $\BWT$ for aligning short reads to a reference genome.

Nowadays, the amount of biological data that is publicly available is too big for most $\BWT$ construction algorithms. A key observation is that, even though the size of these datasets is massive, the information contained within them is highly redundant. This is why tools such as \bigBWT~\cite{BoucherGKLMM19}, \rPFP~\cite{Oliva2023} and \grlBWT~\cite{Diaz-DominguezN22} have emerged. These tools exploit the intrinsic repetitiveness of the input data to build large $\BWT$s fast and in compressed space.

Recently, the authors of~\cite{LiptakMP22} 
devised a variant of matching statistics~\cite{ChangL94} called {\em compressed matching statistics}. This data structure has been proven to be effective in expressing the redundancy of sets of highly similar strings and being used in the process of suffix sorting. In this paper, we are going to show that this compressed representation of matching statistics can also be highly useful in building large $\BWT$s, due to the fact that it uses considerably less space than the input data. 

Experimental results show that our implementation, \CMSBWT, is competitive, if not better, than the state-of-the-art tools for constructing the $\BWT$, although heavier on the space consumption side.

Our algorithm is based on a new insight that allows us to find the run boundaries of the $\BWT$ within special buckets of suffixes, which are closely connected to the fundamental element of the compressed matching statistics, the so-called insert-heads.

The paper is organized as follows. In Section~\ref{sec:basics}, we give definitions and notations used in the remainder of the paper. Section~\ref{sec:cms} contains an overview of the compressed matching statistics. In Section~\ref{sec:bwt-construction}, we present our contribution, describing the algorithm used for constructing the $\BWT$. In Section~\ref{sec:implementation}, we describe details of our implementation and then report experimental results in Section~\ref{sec:experiments}. Finally, in Section~\ref{sec:conclusions}, conclusions and future work are discussed.

\section{Basics}\label{sec:basics}
Let $\Sigma$ be an ordered alphabet of size $\sigma$. A string $T$ over $\Sigma$ is a finite sequence of characters from $\Sigma$. The $i$th character of $T$ is denoted $T[i]$, its length is $|T| = n$, and $T[i..j]$ denotes the substring $T[i]\cdots T[j]$. If $i > j$, then $T[i..j]$ is the empty string $\epsilon$. The suffix $T[i..]=T[i..n]$ is referred to as the $i^{th}$ suffix $\Suf_i(T)$, and $T[..i]=T[1..i]$ is the $i^{th}$ prefix $\Pref_i(T)$. When T is clear from the context, we write $\Suf_i$ for $\Suf_i(T)$.  

We assume that the last character of $T$ is the sentinel character \$. It is set to be smaller than any other character in $\Sigma$ and appears only once as the end-of-string character.

The {\em suffix array} $\SA$ of a string $T$ is a permutation of the set $\{1,\ldots,n\}$ such that $\SA[i]=j$ if $\Suf_j(T)$ is the $i$th in lexicographic order among all suffixes. Numerous suffix array construction algorithms (SACAs) exist in the literature~\cite{NongZC11,Nong13,Baier16,LiLH22,Goto19}. SA-IS~\cite{NongZC11} is by far the most popular linear time SACA, being both simple and fast in practice. 

The {\em inverse suffix array} $\ISA$ is the inverse permutation of $\SA$. 

The {\em longest common prefix} ($\lcp$) of a pair of strings $T$ and $S$ is the longest string $U$ which is the prefix of both $T$ and $S$. The {\em longest-common-prefix array} $\LCP$ is another array closely related to the $\SA$. It is given by: $\LCP[1] = 0$, and for $i > 1$, $\LCP[i]$ is the length of the longest common prefix of the two suffixes $\Suf_{\SA[i-1]}$ and $\Suf_{\SA[i]}$. This array can be computed in linear time, too~\cite{KMP09}. 


The Burrows-Wheeler Transform $\BWT$~\cite{BurrowsWheeler94} is a reversible permutation of the input text $T$. It is defined as $\BWT[i] = \$$ if $\SA[i] = 1$, and $\BWT[i] = T[\SA[i] - 1]$ otherwise.

Let $R$ and $S$ be two strings. The {\em matching statistics} $\MS$ of $S$ w.r.t.\ $R$ is an array of length $|S|$ in which every entry is a pair of integers defined as follows. Fix $i$, let $U_i$ be the longest prefix of suffix $\Suf_i(S)$ which occurs as a substring in $R$. Then, entry $\MS[i] = (p_i, \ell_i)$, where $p_i$ is an occurrence of $U_i$ in $R$, or $-1$ if $U_i = \epsilon$, and $\ell_i = |U_i|$. We will call $U_i$ {\em matching factor} and the character $c_i = S[i+\ell_i]$ will be referred to as {\em mismatch character} of position $i$. We set the end-of-string character of $R$ to be smaller than the one of $S$ ($\# < \$$).

For an integer array $A$ of length $n$ and an index $i$, the previous and next smaller values are defined as follows: $\PSV(A,i) = \max\{i' < i\ :\ A[i'] < A[i] \}$, $\NSV(A,i) = \min\{i' > i\ :\ A[i'] < A[i] \}$. The minimum of the empty set is $-\infty$ and the maximum is $+\infty$. There exists a data structure of size $n\log(3+2\sqrt{2}) + o(n)$ bits that can be built in $\Oh(n)$ time and answers both $\PSV$ and $\NSV$ queries in constant time~\cite{Fischer11}.

Given a set of integers $X$ and an integer $x$, the predecessor of $x$ is the largest element in $X$ less than or equal to $x$. In other words, $\pred_X(x) = \max\{y \in X\ :\ y \leq x\}$. Predecessor queries can be answered in $\Oh(\log\log|X|)$ time using the y-fast trie data structure of Willard~\cite{Willard83} which uses $\Oh(|X|)$ space. 

Let $\Ch = \{S_1, S_2, \ldots, S_m\}$ be a collection of strings  not necessarily distinct, i.e.\ $\Ch$ is a multiset. The total length of $\Ch$ will be denoted by $N$, where we use end-of-string characters to delimit the strings, i.e.\ $N = \sum_{d=1}^m |S_d| + m$. From now on, we will treat $\Ch$ as this concatenated string, slightly abusing notation. 

\medskip
Our problem is defined as follows:
\begin{quote}
    {\bf Problem Statement:} Given a string collection $\Ch = \{ S_1,\ldots, S_m\}$ and a reference string $R$, compute the Burrows-Wheeler Transform $\BWT$ of $\Ch$. 
\end{quote}

The end-of-string character \# of $R$ is assumed to be smaller than any of $\Ch$. Moreover, in our setting, we assume that each $S_i \in \Ch$ is highly similar to $R$.

\section{Compressed Matching Statistics}\label{sec:cms}
Recently, the authors of~\cite{LiptakMP22} introduced a new data structure called Compressed Matching Statistics ($\CMS$). This data structure exploits the redundancy of plain $\MS$, where we have the following property: if $\ell_{i} > 0$, then $\ell_{i+1} \geq \ell_i - 1$. We can identify sequences of the form $(x, x-1, x-2, \ldots)$ where $x = \ell_i$, called {\em decrement runs}. A decrement run ends when $\ell_{j} > \ell_{j-1} - 1$, and $j$ is the starting position of a $\head$. For an example see Figure~\ref{fig:example-cms-ecms}.

\begin{definition}[Compressed matching statistics]~\cite{LiptakMP22}
Let $R,S$ be two strings over $\Sigma$, and $\MS$ be the matching statistics of $S$ w.r.t.\ $R$. The {\em compressed matching statistics (\CMS) of $S$ w.r.t.\ $R$} is a data structure storing $(j,\MS[j])$ for each head $j$, and 
a predecessor data structure on the set of heads $H$.  
\end{definition}

It was shown in~\cite{LiptakMP22} that it is possible to recover each individual value $\MS[i]$ for any $i$ using the following formula: $\MS[i] = (p_i + k, \ell_i - k)$, where $j = \pred_H(i)$ and $k = j - i$. This can be done in $\Oh(\log\log\chi)$ time and $\Oh(\chi)$ space, where $\chi = |H|$.

It was shown in~\cite{LiptakMP22} that storing the matching statistics information only for heads leads to a compression ratio of up to 100 times on real-life data.

\subsection{Enhanced Compressed Matching Statistics}
In~\cite{LiptakMP22}, the $\CMS$ was refined with additional information to get the {\em enhanced compressed matching statistics} (\eCMS). Assuming that all characters occurring in $S$ also occur in $R$ at least once, the information of $p_i$ can be made more specific, namely, one can compute the position that a suffix from $S$ would have if it was present in $\SA_R$ the $\SA$ of $R$. This position is called {\em insert point} of $i$:

\begin{align*} 
ip(i) = 
\begin{cases} 1 & \text{ if $U_i=\epsilon$},\\  
\max\{ j \mid U_i \text{ is a prefix of } R[\SA_R[j]..] \text{ and } R[\SA_R[j]..] < U_ic\} & \text{ if this set is non-empty,}\\
 \min\{j \mid U_i \text{ is a prefix of } R[\SA_R[j]..]\} & \text{ otherwise.} 
 \end{cases} 
\end{align*}

The first case is satisfied only for the end-of-string characters of the collection $\Ch$, because the sentinel character of $R$ is smaller than any other character ($\# < \$$). In the other two cases, the insert point is the lexicographic rank of suffix $i$ among all suffixes of $R$. Suffix $i$ ideally points to the next smaller occurrence of $U_i$ in $R$, if it exists (case 2). Otherwise, it coincides with the smallest occurrence of $U_i$ in $R$ (case 3).

For the $\eCMS$, the positions for which the $\MS$ information is saved are called {\em insert-heads} and are defined as follows: $j$ is an insert-head if $\SA_R[\ip(j)] \neq \SA_R[\ip(j-1)] + 1$.
Some additional information is also stored in each $\ihead$: $c_i$, the mismatching character, and $x_i$, a boolean value associated with $c_i$. This value is set to be smaller (S = 0) if $c_i < R[\SA[\ip(i)] + \ell_i]$ or larger (L = 1) otherwise. Referring to the definition of $\ip$, $x_i=1$ whenever we are in case 2, $x_i=0$ when we are in case 3.

\begin{definition}[Enhanced compressed matching statistics]~\cite{LiptakMP22}
Let $R,S$ be two strings over $\Sigma$. Define the {\em enhanced matching statistics} of $S$ w.r.t.\ $R$ as follows: for $1\leq i \leq |S|$, let $\ems(i) = (q_i,\ell_i,x_i,c_i)$, where $q_i = \SA_R[ip(i)]$, $\ell_i$ is the length of the matching factor $U$ of $i$, $c_i$ is the mismatch character, and $x_i\in \{S,L\}$ indicates whether $U_ic_i$ is smaller (S) or greater (L) than $R[q_i..]$. 
The {\em enhanced compressed matching statistics (\eCMS) of $S$ w.r.t.\ $R$} is a data structure storing $(j,\ems(j))$  for each insert-head $j$, and a predecessor data structure on the set of insert-heads $K$. 
\end{definition}

The size of $K$ is denoted by $|K|=\kappa$. The time for recovering $\MS[i]$ becomes $\Oh(\log\log\kappa)$, while the space becomes $\Oh(\kappa)$~\cite{LiptakMP22}.

By definition, the number of $\ihead$s is larger than the number of $\head$s. Although in~\cite{LiptakMP22} the difference in numbers is noticeable, the compression effect is still very strong. For actual numbers see Section~\ref{subsec:datasets}, more specifically Table~\ref{tab:datasets}.

For an example of $\eCMS$ refer to Figure~\ref{fig:example-cms-ecms}.

\begin{figure}
    \centering
    \begin{tabular}{|r|c|c|c|c|c|c|c|c|c|c|c|c|}
         \hline
         $i$ & 1 & 2 & 3 & 4 & 5 & 6 & 7 & 8 & 9 & 10 & 11 & 12 \\
         $R$ & {\tt C} & {\tt A} & {\tt T} & {\tt T} & {\tt A} & {\tt G} & {\tt A} & {\tt T} & {\tt T} & {\tt A} & {\tt G} & {\tt \#}  \\
         \hline 
         \hline
         $S$ & {\tt T} & {\tt A} & {\tt G} & {\tt A} & {\tt G} & {\tt A} & {\tt T} & {\tt T} & {\tt A} & {\tt T} & {\tt T} & {\tt \$}  \\
         $p_i$ & 4 & 5 & 6 & 5 & 6 & 7 & 8 & 9 & 2 & 3 & 4 & -1\\
         $\ell_i$ & 4 & 3 & 2 & 6 & 5 & 4 & 3 & 2 & 3 & 2 & 1 & 0 \\
         head & \checkmark & & & \checkmark & & & & & \checkmark & & & \\
         \hline
         $q_i$ & 4 & 5 & 6 & 5 & 6 & 2 & 3 & 4 & 7 & 8 & 9 & 12 \\
         \begin{tabular}{@{}r@{}}insert- \\ head\end{tabular} & \checkmark & & & \checkmark & & \checkmark & & & \checkmark & & & \checkmark \\
         $c_i$ & {\tt G} & & & {\tt T} & & {\tt T} & & & {\tt \$} & & & {\tt \$} \\ 
         $x_i$ & S & & & L & & L & & & S & & & L \\
         \hline
    \end{tabular}
    \caption{An example for matching statistics and corresponding $\CMS$ and $\eCMS$. In rows 1 and 2, we report $\MS[i] = (p_i, \ell_i)$ of $S$ w.r.t.\ $R$. In row 3, we mark the starting positions of the heads ($\CMS$). In row 4, for each index, we give the special position $q_i = \SA_R[\ip(i)]$, where $\ip(i)$ is the insert point of $\Suf_i$ in $\SA_R$. In row 5, we mark insert-heads ($\eCMS$). In the last two rows, we complete the information stored in $\ihead$s, namely the mismatch character $c_i$ and $x_i$, the associated boolean value (S = smaller, L = larger).}
    \label{fig:example-cms-ecms}
\end{figure}

\subsection{Comparing two suffixes using eCMS}\label{subsec:compare-suf}
The additional information of $\ihead$s helps bucketing suffixes with respect to the insert point. We will call these buckets {\em insert-buckets}. Assessing the order of any two suffixes having different insert point has been proven in the following lemma:

\begin{lemma}~\cite{LiptakMP22}\label{lemma:sufcomp_1}
Let $1\leq i, j\leq N$. If $ip(i) < ip(j)$, then $\Suf_i < \Suf_{j}$.
\end{lemma}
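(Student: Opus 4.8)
The plan is to pin down, for each index, exactly where $\Suf_i$ would be inserted into the sorted list of suffixes of $R$, and then compare the two insertion positions. First I would dispose of the trivial case $U_i=\epsilon$ (equivalently $\ip(i)=1$): by the remark following the definition of $\ip$ this happens precisely when $\Suf_i$ begins with a sentinel $\$$ of $\Ch$, and since the collection sentinel $\$$ is smaller than every other character of $\Ch$, while $\ip(i)<\ip(j)$ forces $U_j\neq\epsilon$ (so $\Suf_j$ does not begin with $\$$), we obtain $\Suf_i<\Suf_j$. So from now on $U_i,U_j\neq\epsilon$; recall that then $\Suf_i$ has $U_ic_i$ as a prefix and, by maximality of the matching factor, $U_ic_i$ does not occur in $R$.

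The key step is a bracketing claim. Using that the suffixes of $R$ having $U_i$ as a prefix form a contiguous range of $\SA_R$ and that $U_ic_i\notin R$, one shows that $\Suf_i$ lies strictly between two \emph{consecutive} suffixes of $R$: if $i$ falls into case~2 of the definition of $\ip$ (that is, $x_i=L$) then $R[\SA_R[\ip(i)]..]<\Suf_i<R[\SA_R[\ip(i)+1]..]$, and if $i$ falls into case~3 ($x_i=S$) then $R[\SA_R[\ip(i)-1]..]<\Suf_i<R[\SA_R[\ip(i)]..]$, with the obvious conventions at the ends of $\SA_R$. Each inequality is verified by comparing the two strings at position $|U_i|+1$, where $\Suf_i$ carries the mismatch character $c_i$; for instance in case~2 the suffix $R[\SA_R[\ip(i)]..]$ starts with $U_i$ and is smaller than $U_ic_i$, so it disagrees with $\Suf_i$ at position $|U_i|+1$ on the side making it smaller, and symmetrically for the right neighbour. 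The same is established for $\Suf_j$.

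Granting the bracketing claim, let $L_i$ be the index of the left bracketing suffix of $\Suf_i$ (so $L_i=\ip(i)$ in case~2 and $L_i=\ip(i)-1$ in case~3, and the right bracketing index is $L_i+1$), and likewise $L_j$. Whenever $L_i<L_j$ we get at once $\Suf_i<R[\SA_R[L_i+1]..]\le R[\SA_R[L_j]..]<\Suf_j$ since $\SA_R$ is sorted. As $\ip(i)<\ip(j)$ and $L_i\le\ip(i)\le\ip(j)-1\le L_j$, the inequality $L_i<L_j$ can fail only when $L_i=\ip(i)=\ip(j)-1=L_j$, i.e.\ only when $i$ is in case~2, $j$ is in case~3, and $\ip(j)=\ip(i)+1$; every other combination of cases is thereby settled.

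The remaining configuration is where I expect the real work. Put $p=\ip(i)$, so $\ip(j)=p+1$ and both $\Suf_i$ and $\Suf_j$ lie strictly between the consecutive suffixes $R[\SA_R[p]..]<R[\SA_R[p+1]..]$; suppose for contradiction that $\Suf_j<\Suf_i$. I would distinguish whether $R[\SA_R[p+1]..]$ has $U_i$ as a prefix. If it does, then --- using that $R[\SA_R[p]..]$ starts with $U_i$ but not with $U_j$ (since $\ip(j)$ is the first index of the $\SA_R$-range of suffixes starting with $U_j$) --- one forces $U_i$ to be a proper prefix of $U_j$, and comparing $\Suf_j<\Suf_i<R[\SA_R[p+1]..]$ at position $|U_i|+1$ gives $U_j[|U_i|+1]=c_i$, so $U_ic_i$ is a prefix of $U_j$ and hence a substring of $R$, contradicting $U_ic_i\notin R$. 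If $R[\SA_R[p+1]..]$ does not start with $U_i$, then neither of $U_i,U_j$ is a prefix of the other; with $t$ the length of their longest common prefix, $R[\SA_R[p]..]$ and $R[\SA_R[p+1]..]$ agree on their first $t$ characters and then carry $U_i[t+1]$ and $U_j[t+1]$, while $\Suf_j<\Suf_i$ forces $U_j[t+1]<U_i[t+1]$, contradicting $R[\SA_R[p]..]<R[\SA_R[p+1]..]$. Either way we reach a contradiction, so $\Suf_i<\Suf_j$. Apart from this case, the fiddly points I anticipate are the boundary conventions in the bracketing claim and keeping track of strict versus non-strict inequalities when the mismatch character happens to be $\$$.
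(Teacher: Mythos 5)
This lemma is stated in the paper only as a citation of~\cite{LiptakMP22}; the paper contains no proof of its own, so there is nothing internal to compare your argument against. That said, your proof is correct and complete. The bracketing claim --- that each $\Suf_i$ with $U_i\neq\epsilon$ sits strictly between the consecutive reference suffixes $R[\SA_R[L_i]..]$ and $R[\SA_R[L_i+1]..]$, with $L_i=\ip(i)$ in case~2 and $L_i=\ip(i)-1$ in case~3 --- is exactly the right invariant, and it correctly reduces the lemma to the single configuration $L_i=L_j$, which as you observe can only be ``$i$ in case~2, $j$ in case~3, $\ip(j)=\ip(i)+1$''. Your two subcases there are both sound: in subcase~A the squeeze $\Suf_j<\Suf_i<R[\SA_R[p+1]..]$ at position $|U_i|+1$ does force $U_j[|U_i|+1]=c_i$ and hence the forbidden occurrence of $U_ic_i$ in $R$, and in subcase~B the fact that neither matching factor is a prefix of the other lets the order of $R[\SA_R[p]..]$ and $R[\SA_R[p+1]..]$ at position $t+1$ contradict $\Suf_j<\Suf_i$. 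The boundary worries you flag are harmless here: $\ip(i)<\ip(j)\le|R|$ guarantees $L_i+1\le|R|$, and $L_j=0$ would force $U_j=\epsilon$, which is excluded; likewise the verification that a suffix of $R$ cannot be a proper prefix of $U_i$ (because it ends in $\#$, which does not occur in $\Ch$) closes the only gap in the ``first difference at position $t+1\le|U_i|$'' steps. One presentational remark: your identification of case~2 with $x_i=L$ and case~3 with $x_i=S$ matches the paper's convention, but the proof never actually needs $x_i$; it is cleaner to argue directly from which clause of the definition of $\ip$ applies, as you in fact do.
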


On the other hand, when two suffixes belonging to the same insert-bucket are compared the following lemma refines the order:

\begin{lemma}~\cite{LiptakMP22}\label{lemma:sufcomp_2}
Let $1\leq i, j\leq N$, and $ip(i) = ip(j)$. 
\begin{enumerate}
    \item If $\ell_{i} < \ell_{j}$ and $x_{i}=S$, then $\Suf_i < \Suf_j$. 
    \item If $\ell_{i} < \ell_{j}$ and $x_{i}=L$, then $\Suf_j < \Suf_i$. 
    \item If $\ell_{i} = \ell_{j}$ and $x_{i}=S$ and $x_{j}=L$, then $\Suf_i < \Suf_j$. 
    \item If $\ell_{i} = \ell_{j}$ and $x_{i}=x_{j}$ and $c_{i}<c_{j}$, then $\Suf_i < \Suf_j$. 
\end{enumerate}
\end{lemma}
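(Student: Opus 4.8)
The plan is to verify each of the four cases by directly unrolling the definitions of $\ip$, $\ell$, $x$ and $c$, keeping in mind the hypothesis $\ip(i)=\ip(j)$, which means the two suffixes $\Suf_i,\Suf_j$ would be inserted at the same position in $\SA_R$. Write $q = q_i = q_j = \SA_R[\ip(i)]$ and let $U = R[q..q+\ell-1]$ denote the relevant prefix of $R[q..]$; since $U_i$ (resp.\ $U_j$) is the longest prefix of $\Suf_i$ (resp.\ $\Suf_j$) occurring in $R$, and both insert points coincide, both matching factors $U_i$ and $U_j$ are prefixes of $R[q..]$. Hence one of them is a prefix of the other, which is exactly the dichotomy ``$\ell_i<\ell_j$'', ``$\ell_i>\ell_j$'' (symmetric), ``$\ell_i=\ell_j$'' that the four cases carve up. The first thing I would do is record this common-prefix structure cleanly, because all four cases rest on it.

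For cases 1 and 2, assume $\ell_i<\ell_j$, so $U_i$ is a proper prefix of $U_j$, and $\Suf_j$ agrees with $R[q..]$ on a longer prefix than $\Suf_i$ does. The suffix $\Suf_i$ first deviates from $R[q..]$ at position $\ell_i+1$, where it has the mismatch character $c_i$; whether $\Suf_i$ falls lexicographically before or after $R[q..]$ (and hence before or after the longer-matching $\Suf_j$, which sits on the same side of $R[q..]$ as its own continuation dictates but still agrees with $\Suf_i$'s matching region) is governed precisely by $x_i$: if $x_i=S$ then $c_i$ is smaller than the corresponding character of $R[q..]$, so $\Suf_i<R[q..]$, and since $\Suf_j$ shares the first $\ell_i$ characters with $\Suf_i$ but then continues to match $R[q..]$, we get $\Suf_i<\Suf_j$ (case 1); if $x_i=L$ the same argument with the inequality reversed gives $\Suf_j<\Suf_i$ (case 2). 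I would phrase this by comparing $\Suf_i[\ell_i+1]=c_i$ against $\Suf_j[\ell_i+1]=R[q+\ell_i]$ and invoking the sign encoded by $x_i$.

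For cases 3 and 4, assume $\ell_i=\ell_j$, so $U_i=U_j=U$ and both suffixes agree with $R[q..]$ on exactly their first $\ell$ characters and then deviate, $\Suf_i$ with $c_i$ and $\Suf_j$ with $c_j$ at the same position $\ell+1$. In case 3, $x_i=S$ puts $\Suf_i$ on the smaller side of $R[q..]$ and $x_j=L$ puts $\Suf_j$ on the larger side, so $\Suf_i < R[q..] < \Suf_j$ after the common prefix, giving $\Suf_i<\Suf_j$. In case 4, $x_i=x_j$ means both mismatch characters lie on the same side of $R[q+\ell]$, so the sign of $x$ no longer separates them and one simply compares the mismatch characters directly: $c_i<c_j$ forces $\Suf_i<\Suf_j$ since they agree up to position $\ell$ and differ at position $\ell+1$. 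Here I would be slightly careful about the boundary sub-case where one of $c_i,c_j$ equals $R[q+\ell]$ itself — but by definition of the matching factor as the \emph{longest} prefix occurring in $R$, the mismatch character cannot equal $R[q+\ell]$ (else the factor would extend), so this degenerate situation does not arise; noting this is the one genuinely delicate point.

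I expect the main obstacle to be handling the end-of-string characters consistently. The sentinels $\#$ of $R$ and $\$$ of $\Ch$ need the convention $\#<\$$, and the $U=\epsilon$ situation (insert point $1$, reserved for the collection's sentinels) has to be reconciled with cases 3 and 4 — two distinct sentinel-ending suffixes of $\Ch$ share $\ip=1$, $\ell=0$, $x=L$, and are ordered by which string they terminate, which is consistent with ``$c_i<c_j$'' only if we read the document-separating order into the comparison. I would either argue that within a single string $S$ this clash cannot occur (at most one suffix of $S$ ends in $\$$), deferring the cross-string tie-break to the collection-level ordering, or state the lemma's scope as being about suffixes of a fixed $S$ versus $R$, which is how it is used in Section~\ref{subsec:compare-suf}. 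Getting this bookkeeping right, rather than any substantive combinatorial difficulty, is where the care goes.
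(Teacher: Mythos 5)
The paper does not prove this lemma at all --- it is imported verbatim from \cite{LiptakMP22} with a citation --- so there is no in-paper proof to compare against; your proposal has to stand on its own, and it does. The two facts you isolate are exactly the right ones: (i) since $\ip(i)=\ip(j)$ and the insert point is by definition an index $p$ with $U_i$ a prefix of $R[\SA_R[p]..]$, both matching factors are prefixes of the same reference suffix $R[q..]$, hence one is a prefix of the other and $\Suf_i,\Suf_j$ share a common prefix of length $\min(\ell_i,\ell_j)$; (ii) the mismatch character $c_i$ can never equal $R[q+\ell_i]$, since otherwise $U_ic_i$ would occur in $R$ at position $q$, contradicting maximality of $U_i$. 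With these, each of the four cases reduces to a single character comparison at position $\min(\ell_i,\ell_j)+1$, decided by the definition of $x$ (which the paper states precisely as the sign of $c_i$ versus $R[\SA[\ip(i)]+\ell_i]$), and your case analysis is correct. Two small remarks: fact (ii) is not only needed in case 4 --- it is what guarantees in cases 1--3 that the comparison is actually \emph{decided} at position $\ell_i+1$ rather than continuing, so it belongs at the top of the proof next to fact (i); and your worry about the sentinel positions is legitimate but resolved by the paper's conventions ($\#<\$$, distinct per-document sentinels as in Figure~\ref{fig:example-cms-bwt}, and the assumption that every character of $\Ch$ other than $\$$ occurs in $R$, which confines $U_i=\epsilon$ to sentinel positions where $\ell=0$, $x=L$ and case 4 orders them by document index). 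Neither point is a gap, just bookkeeping worth making explicit.
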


To achieve the final correct order of two suffixes having the same $\ihead$ information, in~\cite{LiptakMP22} it was suggested sorting only the $\ihead$s. Then, using the new rank for each head, the total order of two suffixes can be established. 

If two arbitrary suffixes from $\Ch$ are being compared, one needs to perform two predecessor queries to get the $\ihead$ of each suffix. This implies that the time spent for a single comparison is $\Oh(\log\log\kappa)$. As we will see in Section~\ref{sec:bwt-construction}, we can avoid the predecessor queries when scanning the collection left to right, resulting in constant time comparisons. This is because we will only perform comparisons of suffixes of one $\ihead$ at a time with other $\ihead$s of the same insert-bucket. 

\subsection{Computing the eCMS}\label{subsec:compute-eCMS}
We will use the procedure outlined in~\cite{LiptakMP22}.

The data structures needed to compute the $\eCMS$ of $\Ch$ w.r.t.\ $R$ are the suffix array $\SA_R$, the inverse suffix array $\ISA_R$, the $\LCP$-array $\LCP_R$, and the $\RMQ$ data structure for $\PSV$-$\NSV$ queries on $\LCP_R$. Every data structure can be constructed in $\Oh(|R|)$ time and space.

This procedure takes $\Oh(N\log|R|)$ time and $\Oh(|R|)$ space and outputs the set of insert-heads of size $\Oh(\kappa)$.

To speed up the practical running time, we will use also the following proposed heuristic of~\cite{LiptakMP22}. Because we work with highly similar strings, it is common to have a singleton interval (an interval of size one) after the failure of a sequence of right extensions. A key insight is that also after the subsequent left contraction, the interval remains of size one. This means that the matching factor $U_i$ lies within a leaf branch in a hypothetical suffix tree of $R$. In order to detect these cases, we can compare $\ell_i$ to the maximum value in $\LCP_R$. If $\ell_i - 1 > \max(\LCP_R)$, then it means that there is no other suffix in $R$ with a prefix equal to $U_i$. This means that we are in a leaf branch.
Computing the left contraction is now equal to accessing $\ISA[p_i + 1]$. This bypasses the $\PSV$ and $\NSV$ queries on $\LCP_R$, avoiding the corresponding cache misses. 
Computing a single maximum can be too restricting for some datasets, so a refinement of this strategy is to divide $\LCP_R$ into blocks and compute a maximum value for each of them. 

The practical speedup can be of an order of magnitude when using this last strategy on sets of highly repetitive strings, as it was shown in~\cite{LiptakMP22}.

\section{Computing the BWT with enhanced Compressed Matching Statistics}\label{sec:bwt-construction}

In this section, we are going to outline the procedure used to compute the $\BWT$ of $\Ch$ using only data structures built on $R$ and the $\eCMS$ of $\Ch$ w.r.t.\ $R$.

We will use the following heuristic in order to speed up the computation of $\BWT(\Ch)$: suffixes in text order between two $\ihead$s are preceded by the same character present in the reference. This can be intuitively explained by looking at the way $\eCMS$ is built: any position between two consecutive $\ihead$s is consecutive in text order both in $R$ and $\Ch$. Therefore, by knowing the insert point of each suffix we know what its position is in the previously computed $\SA_R$, and consequently which is the preceding character stored in $\BWT_R$. This insight tells us that we just have to ``expand'' $\BWT_R$ based on the number of suffixes with the same insert-point while taking care of $\ihead$s. The suffixes corresponding to the starting positions of $\ihead$s are the only ones that need to be sorted inside each insert-bucket. See Figure~\ref{fig:example-head-vs-suf} for an example.

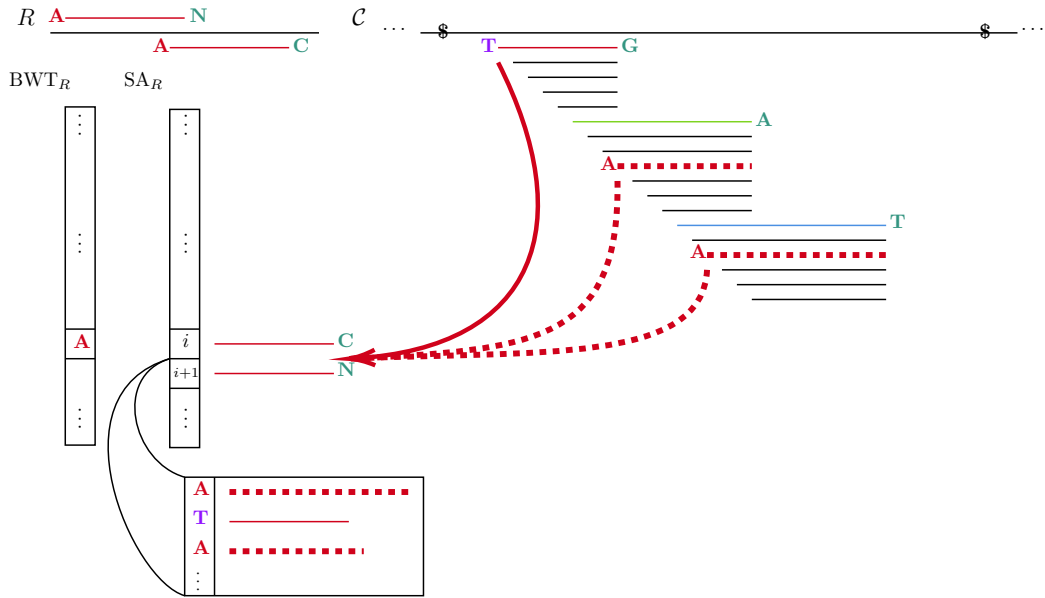
\begin{figure}
    \centering
    \tikzset{every picture/.style={line width=0.75pt}} 

    \scalebox{0.745}{
    \tikzset{every picture/.style={line width=0.75pt}} 
    
    \begin{tikzpicture}[x=0.75pt,y=0.75pt,yscale=-1,xscale=1]
    
    \draw    (30,20) -- (210,20) ;
    \draw    (278,20) -- (678,20) ;
    \draw   (40,70) -- (60,70) -- (60,298.33) -- (40,298.33) -- cycle ;
    \draw   (110,71.67) -- (130,71.67) -- (130,300) -- (110,300) -- cycle ;
    \draw    (110,220) -- (130,220) ;
    \draw    (110,240) -- (130,240) ;
    \draw    (110,260) -- (130,260) ;
    \draw    (40,220) -- (60,220) ;
    \draw    (40,240) -- (60,240) ;
    \draw [color={rgb, 255:red, 208; green, 2; blue, 27 }  ,draw opacity=1 ]   (330,30) -- (410,30) ;
    \draw [color={rgb, 255:red, 208; green, 2; blue, 27 }  ,draw opacity=1 ]   (40,10) -- (120,10) ;
    \draw [color={rgb, 255:red, 208; green, 2; blue, 27 }  ,draw opacity=1 ]   (110,30) -- (190,30) ;
    \draw [color={rgb, 255:red, 208; green, 2; blue, 27 }  ,draw opacity=1 ]   (140,230) -- (220,230) ;
    \draw [color={rgb, 255:red, 208; green, 2; blue, 27 }  ,draw opacity=1 ]   (140,250) -- (220,250) ;
    \draw    (340,40) -- (410,40) ;
    \draw    (350,50) -- (410,50) ;
    \draw    (360,60) -- (410,60) ;
    \draw    (370,70) -- (410,70) ;
    \draw [color={rgb, 255:red, 126; green, 211; blue, 33 }  ,draw opacity=1 ]   (380,80) -- (500,80) ;
    \draw    (390,90) -- (500,90) ;
    \draw    (400,100) -- (500,100) ;
    \draw [color={rgb, 255:red, 208; green, 2; blue, 27 }  ,draw opacity=1 ][line width=3]  [dash pattern={on 3.38pt off 3.27pt}]  (410,110) -- (500,110) ;
    \draw    (420,120) -- (500,120) ;
    \draw    (440,140) -- (500,140) ;
    \draw    (430,130) -- (500,130) ;
    \draw [color={rgb, 255:red, 74; green, 144; blue, 226 }  ,draw opacity=1 ]   (450,150) -- (590,150) ;
    \draw    (460,160) -- (590,160) ;
    \draw    (480,180) -- (590,180) ;
    \draw    (500,200) -- (590,200) ;
    \draw    (490,190) -- (590,190) ;
    \draw [color={rgb, 255:red, 208; green, 2; blue, 27 }  ,draw opacity=1 ][line width=3]  [dash pattern={on 3.38pt off 3.27pt}]  (470,170) -- (590,170) ;
    \draw [color={rgb, 255:red, 208; green, 2; blue, 27 }  ,draw opacity=1 ][line width=3]  [dash pattern={on 3.38pt off 3.27pt}]  (410,120) .. controls (411.67,228.33) and (356.33,234.33) .. (230,240) ;
    \draw [color={rgb, 255:red, 208; green, 2; blue, 27 }  ,draw opacity=1 ][line width=3]  [dash pattern={on 3.38pt off 3.27pt}]  (470,180) .. controls (467,244.33) and (356.33,234.33) .. (230,240) ;
    \draw [color={rgb, 255:red, 208; green, 2; blue, 27 }  ,draw opacity=1 ][line width=2.25]    (330,40) .. controls (384.45,148.57) and (356.9,232.64) .. (233.76,239.81) ;
    \draw [shift={(230,240)}, rotate = 357.43] [color={rgb, 255:red, 208; green, 2; blue, 27 }  ,draw opacity=1 ][line width=2.25]    (17.49,-5.26) .. controls (11.12,-2.23) and (5.29,-0.48) .. (0,0) .. controls (5.29,0.48) and (11.12,2.23) .. (17.49,5.26)   ;
    \draw    (120,400) .. controls (84.33,389.67) and (30.33,263.67) .. (110,240) ;
    \draw    (120,320) .. controls (84.33,309.67) and (71.67,251.67) .. (110,240) ;
    \draw   (120,320) -- (280,320) -- (280,400) -- (120,400) -- cycle ;
    \draw    (140,320) -- (140,400) ;
    \draw [color={rgb, 255:red, 208; green, 2; blue, 27 }  ,draw opacity=1 ][line width=3]  [dash pattern={on 3.38pt off 3.27pt}]  (150,330) -- (270,330) ;
    \draw [color={rgb, 255:red, 208; green, 2; blue, 27 }  ,draw opacity=1 ]   (150,350) -- (230,350) ;
    \draw [color={rgb, 255:red, 208; green, 2; blue, 27 }  ,draw opacity=1 ][line width=3]  [dash pattern={on 3.38pt off 3.27pt}]  (150,370) -- (240,370) ;
    
    \draw (6,2) node [anchor=north west][inner sep=0.75pt]  [font=\Large] [align=left] {$\displaystyle R$};
    \draw (251,15) node [anchor=north west][inner sep=0.75pt]  [align=left] {$\displaystyle \ldots $};
    \draw (679,15) node [anchor=north west][inner sep=0.75pt]   [align=left] {$\displaystyle \ldots $};
    \draw (231,2) node [anchor=north west][inner sep=0.75pt]  [font=\Large] [align=left] {$\displaystyle \mathcal{C}$};
    \draw (1,45) node [anchor=north west][inner sep=0.75pt]   [align=left] {$\displaystyle \text{BWT}_{R}$};
    \draw (78,45) node [anchor=north west][inner sep=0.75pt]   [align=left] {$\displaystyle \text{SA}_{R}$};
    \draw (123,72.87) node [anchor=north west][inner sep=0.75pt]  [rotate=-89.62] [align=left] {$\displaystyle \ldots $};
    \draw (123,152.87) node [anchor=north west][inner sep=0.75pt]  [rotate=-89.62] [align=left] {$\displaystyle \ldots $};
    \draw (123,271.01) node [anchor=north west][inner sep=0.75pt]  [rotate=-89.62] [align=left] {$\displaystyle \ldots $};
    \draw (52,271.01) node [anchor=north west][inner sep=0.75pt]  [rotate=-89.62] [align=left] {$\displaystyle \ldots $};
    \draw (52,72.87) node [anchor=north west][inner sep=0.75pt]  [rotate=-89.62] [align=left] {$\displaystyle \ldots $};
    \draw (52,152.87) node [anchor=north west][inner sep=0.75pt]  [rotate=-89.62] [align=left] {$\displaystyle \ldots $};
    \draw (288,12) node [anchor=north west][inner sep=0.75pt]   [align=left] {\textbf{\$}};
    \draw (651,12) node [anchor=north west][inner sep=0.75pt]   [align=left] {\textbf{\$}};
    \draw (117,222) node [anchor=north west][inner sep=0.75pt]   [align=left] {$\displaystyle i$};
    \draw (111,245) node [anchor=north west][inner sep=0.75pt]  [font=\scriptsize] [align=left] {$\displaystyle i$+1};
    \draw (221,221) node [anchor=north west][inner sep=0.75pt]  [color={rgb, 255:red, 52; green, 149; blue, 129 }  ,opacity=1 ] [align=left] {\textbf{C}};
    \draw (221,241) node [anchor=north west][inner sep=0.75pt]  [color={rgb, 255:red, 52; green, 149; blue, 129 }  ,opacity=1 ] [align=left] {\textbf{N}};
    \draw (411,22) node [anchor=north west][inner sep=0.75pt]  [color={rgb, 255:red, 52; green, 149; blue, 129 }  ,opacity=1 ] [align=left] {\textbf{G}};
    \draw (501,72) node [anchor=north west][inner sep=0.75pt]  [color={rgb, 255:red, 52; green, 149; blue, 129 }  ,opacity=1 ] [align=left] {\textbf{A}};
    \draw (457,161) node [anchor=north west][inner sep=0.75pt]  [color={rgb, 255:red, 208; green, 2; blue, 27 }  ,opacity=1 ] [align=left] {\textbf{A}};
    \draw (397,102) node [anchor=north west][inner sep=0.75pt]  [color={rgb, 255:red, 208; green, 2; blue, 27 }  ,opacity=1 ] [align=left] {\textbf{A}};
    \draw (317,22) node [anchor=north west][inner sep=0.75pt]  [color={rgb, 255:red, 144; green, 19; blue, 254 }  ,opacity=1 ] [align=left] {\textbf{T}};
    \draw (44,222) node [anchor=north west][inner sep=0.75pt]  [color={rgb, 255:red, 208; green, 2; blue, 27 }  ,opacity=1 ] [align=left] {\textbf{A}};
    \draw (591,141) node [anchor=north west][inner sep=0.75pt]  [color={rgb, 255:red, 52; green, 149; blue, 129 }  ,opacity=1 ] [align=left] {\textbf{T}};
    \draw (124,321) node [anchor=north west][inner sep=0.75pt]  [color={rgb, 255:red, 208; green, 2; blue, 27 }  ,opacity=1 ] [align=left] {\textbf{A}};
    \draw (124,341) node [anchor=north west][inner sep=0.75pt]  [color={rgb, 255:red, 144; green, 19; blue, 254 }  ,opacity=1 ] [align=left] {\textbf{T}};
    \draw (124,361) node [anchor=north west][inner sep=0.75pt]  [color={rgb, 255:red, 208; green, 2; blue, 27 }  ,opacity=1 ] [align=left] {\textbf{A}};
    \draw (131,382) node [anchor=north west][inner sep=0.75pt]  [font=\footnotesize] [rotate=-89.62][align=left] {$\displaystyle \ldots $};
    \draw (97,22) node [anchor=north west][inner sep=0.75pt]  [color={rgb, 255:red, 208; green, 2; blue, 27 }  ,opacity=1 ] [align=left] {\textbf{A}};
    \draw (27,2) node [anchor=north west][inner sep=0.75pt]  [color={rgb, 255:red, 208; green, 2; blue, 27 }  ,opacity=1 ] [align=left] {\textbf{A}};
    \draw (122,2) node [anchor=north west][inner sep=0.75pt]  [color={rgb, 255:red, 52; green, 149; blue, 129 }  ,opacity=1 ] [align=left] {\textbf{N}};
    \draw (191,22) node [anchor=north west][inner sep=0.75pt]  [color={rgb, 255:red, 52; green, 149; blue, 129 }  ,opacity=1 ] [align=left] {\textbf{C}};
    
    \end{tikzpicture}
    }
    \caption{Example showcasing the proposed heuristic. Solid coloured lines under $\Ch$ are matching factors for $\ihead$s, while solid black lines are matching factors for suffixes inbetween $\ihead$s. Dotted red lines are for suffixes inbetween $\ihead$s that share the same insert-point as the first solid red line (an $\ihead$). Suffixes inbetween $\ihead$s share the same preceding character (red A) with the reference $R$, while the red-coloured $\ihead$ is preceded by a different character (purple T). Zooming in on the insert-bucket, we see that it can happen that the purple T goes between the red As, breaking what would have been a run of only red As.}
    \label{fig:example-head-vs-suf}
\end{figure}

\begin{lemma}\label{lemma:same-preceding-character}
Let $\Suf_i$ and $\Suf_j$ be two suffixes of $\Ch$. If $\ip(i) = \ip(j)$ and the two suffixes are not the start of an $\ihead$, then $\Suf_i$ and $\Suf_j$ are preceded by the same character $c = R[\SA[\ip(i)]-1]$, i.e.\ $\Ch[i-1] = \Ch[j-1] = c = R[\SA[\ip(i)]-1]$.
\end{lemma}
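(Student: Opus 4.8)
The plan is to isolate a single-position statement from which the lemma follows at once: \emph{for every position $k$ of $\Ch$ that is not an insert-head, $\Ch[k-1] = R[\SA_R[\ip(k)]-1]$}. Granting this and applying it to $k=i$ and $k=j$, the hypothesis $\ip(i)=\ip(j)$ immediately gives $\Ch[i-1] = R[\SA_R[\ip(i)]-1] = R[\SA_R[\ip(j)]-1] = \Ch[j-1]$, which is exactly the asserted common value $c=R[\SA[\ip(i)]-1]$ (the lemma's $\SA$ being $\SA_R$).

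To prove the single-position statement, first I would unwind the definition of insert-head. Since $k$ is not an insert-head, $\SA_R[\ip(k)] = \SA_R[\ip(k-1)]+1$; abbreviating $q_k=\SA_R[\ip(k)]$ and $q_{k-1}=\SA_R[\ip(k-1)]$, this reads $q_k=q_{k-1}+1$, and since $q_{k-1}\ge 1$ we get $q_k\ge 2$, so $q_k-1$ is a valid index into $R$. Next I would verify that $\Ch[k-1]$ is an ordinary (non-sentinel) character of $\Ch$. If $\Ch[k-1]$ were an end-of-string symbol of $\Ch$, then $U_{k-1}=\epsilon$, so $\ip(k-1)$ would be given by case~1, i.e.\ $\ip(k-1)=1$ and $q_{k-1}=\SA_R[1]=|R|$ (the position in $R$ of its smallest character $\#$); this forces $q_k=|R|+1$, a contradiction. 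Hence $\Ch[k-1]\in\Sigma$, so by the standing assumption that every character occurring in $\Ch$ also occurs in $R$, the length-one prefix $\Ch[k-1]$ of $\Suf_{k-1}(\Ch)$ occurs in $R$; therefore $U_{k-1}\neq\epsilon$ and $\ip(k-1)$ is governed by case~2 or case~3. In both of these cases the defining set-builder condition explicitly requires $U_{k-1}$ to be a prefix of $R[\SA_R[\ip(k-1)]..]=R[q_{k-1}..]$. Since $U_{k-1}$ is also a prefix of $\Suf_{k-1}(\Ch)$, comparing first characters gives $\Ch[k-1]=U_{k-1}[1]=R[q_{k-1}]$, and substituting $q_{k-1}=q_k-1$ yields $\Ch[k-1]=R[q_k-1]=R[\SA_R[\ip(k)]-1]$, as required.

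I do not expect a deep obstacle: the statement is essentially a direct consequence of how the insert point and insert-heads are defined. The only point requiring care is the boundary bookkeeping --- checking that a non-insert-head $k$ automatically satisfies $k\ge 2$, that $\Ch[k-1]$ is neither a $\$$ of $\Ch$ nor the $\#$ of $R$ (so that $\ell_{k-1}\ge 1$ and the ``$U_{k-1}$ is a prefix of $R[q_{k-1}..]$'' property is actually available), and that $q_k-1\ge 1$. These are handled by the observation above, which shows in particular that position $1$ of $\Ch$ and every position immediately following a $\$$ are insert-heads (forced by case~1 of $\ip$ together with $\#<\$$), and hence are excluded by the hypothesis of the lemma.
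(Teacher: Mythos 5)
Your proof is correct and follows essentially the same route as the paper's: both use the non-$\ihead$ condition to deduce $\SA_R[\ip(k-1)]=\SA_R[\ip(k)]-1$ and then read off $\Ch[k-1]$ as the first character of the nonempty matching factor $U_{k-1}$, which by the definition of the insert point equals $R[\SA_R[\ip(k)]-1]$. The only difference is presentational: you factor the argument through a single-position claim and explicitly justify that $U_{k-1}\neq\epsilon$ (via the sentinel/boundary argument), a fact the paper's proof asserts without justification.
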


\begin{proof}
By assumption we know that $\ip(i) = \ip(j)$, therefore $\SA[\ip(i)] = \SA[\ip(j)]$. Because $\Suf_i$ and $\Suf_j$ are not the starting positions of any insert-head, it is true that $\SA[\ip(i-1)] = \SA[\ip(i)]-1$ and $\SA[\ip(j-1)] = \SA[\ip(j)]-1$. Therefore, $\SA[\ip(i-1)] = \SA[\ip(j-1)]$ and also $\ip(i-1) = \ip(j-1)$. Since $U_{i-1},U_{j-1} \neq \epsilon$ it follows that the first character of $U_{i-1}$ and $U_{j-1}$ is the same.
\end{proof}

While computing the $\eCMS$ of $\Ch$ w.r.t.\ $R$, we can simultaneously count how many suffixes fall in each insert-bucket. We recall that we can have at most $|R|$ insert-buckets, so the size of the array of counters called {\em bucket-counters} is $|R|\log N$ bits.
By Lemma~\ref{lemma:same-preceding-character} we know that suffixes in the same insert-bucket have different preceding character only if one of them is an $\ihead$. By scanning again the collection, we just need to count how many suffixes belonging to the same insert-bucket come before each $\ihead$. In a sense, $\ihead$s work as run boundaries inside their insert-bucket, because they are preceded by a character that is different from the one preceding other non-$\ihead$ suffixes.
Therefore, we only need an additional counter for each $\ihead$ to keep track of this quantity. We will store the counters in an array called {\em head-counters}. Inside a given insert-bucket we already know the total order of $\ihead$s, because we have sorted the whole set $K$ after the computation of $\eCMS$, as mentioned in Section~\ref{subsec:compare-suf}. 

Since we are scanning $\Ch$ left-to-right, we know $\MS[i]$ for every suffix, without the need of using predecessor queries as we explain next.
By saving the $\eCMS$ in text order, we start from the first $\ihead$ $k_1$. Every suffix $i$ before the starting position of $k_2$ have $MS[i] = (q_1 + (i - j_1), \ell_1 - (i - j_1))$, where $j_1 = 1$ is the starting position of $k_1$. Then, when we reach the starting position of $k_2$, we just have to repeat this procedure until every couple of $\ihead$s has been processed. We will compare suffix $i$ only with $\ihead$s stored in the corresponding insert-bucket, therefore we do not need to perform any predecessor query. Ultimately, the comparisons are made in constant time.
Given $\Suf_i$, we can perform a binary search in the bucket corresponding to $\ip(i)$ taking $\Oh(\log K_i)$ time, where $K_i$ is the set of $\ihead$s in the bucket with $\ip(i)$. After finding the correct index using Lemma~\ref{lemma:sufcomp_2} and, if necessary, resorting to the rank of the sorted $\ihead$s, we increment the counter for that $\ihead$. The array of {\em head-counters} takes $\kappa\log N$ bits of space.

Building the $\BWT$ of $\Ch$ is then just a matter of interleaving bucket-counters and head-counters. For $1 \leq i \leq |R|$, let $x = \text{bucket-counter}[i]$ be the number of suffixes in that bucket. If no $\ihead$s are present in the bucket,  write $c = R[\SA[i] - 1]$ in the output $\BWT(\Ch)$ $x$ times. Otherwise, if at least one $\ihead$ is in the bucket, for each head-counter in the current insert-bucket write $c$ repeated as many times as indicated in the head-counter. Then, after the head-counter is processed, write the character that precedes the $\ihead$ itself, namely $\Ch[j-1]$, where $j$ is the starting position of the $\ihead$. Each time we write a character from either the head-counter or the head itself, we subtract one from $x$. If at the end of this procedure, $x$ is not equal to 0, it means we still need to write that number of $c$ characters in the output $\BWT$. This is because this amount of suffixes was bigger than any $\ihead$ in their insert-bucket. 

The main procedure is outlined in Algorithm~\ref{algo:CMS-BWT} and the running time and space consumption are reported in Proposition~\ref{prop:CMS-BWT}. A full example can be found in Figure~\ref{fig:example-cms-bwt}.

\begin{proposition}\label{prop:CMS-BWT}
Given $R$ and $\Ch$, we can compute $\BWT(\Ch)$ in $\Oh(N\log\kappa + N\log|R| + |R|)$ time and $\Oh(\kappa + |R|)$ space.
\end{proposition}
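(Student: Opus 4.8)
The plan is to check that Algorithm~\ref{algo:CMS-BWT} is correct and then bound the cost of each of its phases: (i) building $\SA_R$, $\ISA_R$, $\LCP_R$ and the $\PSV$-$\NSV$ structure on $R$; (ii) computing the $\eCMS$ of $\Ch$ w.r.t.\ $R$ together with the bucket-counters; (iii) sorting the set $K$ of insert-heads; (iv) a second left-to-right scan of $\Ch$ that fills the head-counters; and (v) the final interleaving pass that emits $\BWT(\Ch)$.

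For correctness I would proceed as follows. By Lemma~\ref{lemma:sufcomp_1}, the suffixes of $\Ch$ appear in $\SA(\Ch)$ grouped by insert point, the insert-buckets occurring in increasing order of $\ip$, so it suffices to produce, for each $p$ with $1\le p\le |R|$, the contiguous block of $\BWT(\Ch)$ belonging to insert-bucket $p$. Fix such a $p$ and set $c_p = R[\SA_R[p]-1]$. By Lemma~\ref{lemma:same-preceding-character}, every suffix of insert-bucket $p$ that is not the start of an insert-head is preceded by $c_p$; hence inside this block the only characters different from $c_p$ are the $\Ch[j-1]$ for the insert-heads $j$ with $\ip(j)=p$, and the block is determined once we know, in $\SA(\Ch)$-order, how these insert-heads interleave with the runs of non-insert-head suffixes. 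The relative order of the insert-heads of bucket $p$ is the restriction of the global order of $K$ computed in step (iii); and step (iv) places each non-insert-head suffix $i$ of bucket $p$ relative to those insert-heads via a binary search that compares two suffixes in $\Oh(1)$ time using Lemma~\ref{lemma:sufcomp_2} (and, when all recorded $\ems$ fields coincide, the precomputed ranks of the next relevant insert-heads). The head-counter of an insert-head then records exactly how many non-insert-head suffixes of the bucket precede it, and the residual value $x$ left after all head-counters are consumed is the number of non-insert-head suffixes larger than every insert-head of the bucket. The interleaving pass of step (v) writes, for bucket $p$ in $\SA(\Ch)$ order, first the non-insert-head suffixes below the smallest insert-head ($c_p$ repeated), then alternately each insert-head's character $\Ch[j-1]$ and the non-insert-head suffixes up to the next insert-head, and finally the remaining $c_p$'s; this is precisely the bucket-$p$ block of $\BWT(\Ch)$.

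For the running time, step (i) costs $\Oh(|R|)$; step (ii) costs $\Oh(N\log|R|)$ by the analysis recalled in Section~\ref{subsec:compute-eCMS}, plus $\Oh(N)$ for the bucket-counter increments done during the same scan. Since the $\eCMS$ is stored in text order, every decrement run is unrolled from its insert-head without predecessor queries, so in steps (iv)--(v) each $\MS[i]$ is recovered in $\Oh(1)$; then for each of the $N$ suffixes the binary search inside its insert-bucket costs $\Oh(\log\kappa)$ with $\Oh(1)$-time comparisons, giving $\Oh(N\log\kappa)$ for step (iv). Sorting $K$ in step (iii) uses $\Oh(\kappa\log\kappa)$ such comparisons, and since $K\subseteq\{1,\dots,N\}$ we have $\kappa\le N$, so this is $\Oh(N\log\kappa)$. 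Step (v) scans the $|R|$ bucket-counters, consumes the $\kappa$ head-counters in total, and emits the $N$ output characters, i.e.\ $\Oh(|R|+\kappa+N)$. Summing and absorbing the lower-order $\Oh(N)$ and $\Oh(\kappa)$ terms yields $\Oh(N\log\kappa + N\log|R| + |R|)$. For the space, the data structures on $R$ and the bucket-counters ($|R|\log N$ bits) take $\Oh(|R|)$ words, while the $\eCMS$ (the insert-head records together with the predecessor structure on $K$) and the head-counters ($\kappa\log N$ bits) take $\Oh(\kappa)$ words; the input $\Ch$ is read as one left-to-right stream and $\BWT(\Ch)$ is written as one output stream, so neither is charged to working memory, giving $\Oh(\kappa + |R|)$ overall.

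The step I expect to need the most care is the claim that suffix comparisons are constant time. Lemma~\ref{lemma:sufcomp_2} covers every case except when two suffixes of the same insert-bucket agree on $\ell$, $x$ and $c$; there the comparison must be deferred to a later position, and the subtle point is that, after $K$ has been sorted in step (iii), this reduces to a lookup of a stored insert-head rank, so that in steps (iv)--(v) the comparison is $\Oh(1)$. For step (iii) itself one must additionally check that this tie-breaking dependency among insert-heads is acyclic --- each deferral strictly increases the text position and so eventually reaches a sentinel --- so that the order on $K$ is well defined and the sort still runs in $\Oh(\kappa\log\kappa)$ time; this, together with verifying that the hypotheses of Lemma~\ref{lemma:same-preceding-character} indeed hold for every non-insert-head suffix of every insert-bucket, completes the argument.
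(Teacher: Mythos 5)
Your proposal is correct and follows essentially the same route as the paper's own (much terser) proof: account for the cost of each phase of Algorithm~\ref{algo:CMS-BWT}, with the dominant terms being $\Oh(N\log|R|)$ for the $\eCMS$ computation and $\sum_i |B_i|\log|K_i| \le N\log\kappa$ for the per-bucket binary searches, and $\Oh(\kappa+|R|)$ space from the insert-heads plus the structures on $R$. You additionally supply the correctness argument (via Lemmas~\ref{lemma:sufcomp_1}, \ref{lemma:sufcomp_2} and \ref{lemma:same-preceding-character}) and an explicit bound for sorting $K$, both of which the paper's proof leaves implicit, and you correctly flag the tie-breaking in insert-head comparisons as the one point needing extra care.
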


\begin{proof}
Computing all data structures for $R$ can be done in linear time and space in $|R|$. Computing the $\eCMS$ of $\Ch$ takes $\Oh(N\log|R|)$ time and $\Oh(|R|)$ space using the approach described in Section~\ref{subsec:compute-eCMS}. The computation of $\BWT(\Ch)$ is bounded by the time of counting how many suffixes are smaller than each $\ihead$ in a bucket with the same $\ip(i)$. More specifically, $\sum_{1\leq i \leq |R|}B_i\log K_i \leq |C|\log\kappa$, where $B_i$ is the set of suffixes belonging to insert-bucket $i$ and $K_i$ the set of $\ihead$s within the same insert-bucket $i$. The space consumption is dominated by the number of $\ihead$s and the size of the data structures on $R$.
\end{proof}

Because we are working with highly similar strings, we expect to have few $\ihead$s, having long matches between any string of $\Ch$ and $R$. This makes the sorting part of $\ihead$s very fast in practice, due to $\kappa$ being small. Also, the process of binary searching is conducted bucket by bucket, so the number of heads in the same bucket is expected to be smaller than $\kappa$.

Moreover, if the $\ihead$s are concentrated in a few insert-buckets we can entirely skip the computation for each bucket without $\ihead$s. More information on real-life datasets related to this insight can be found in Section~\ref{subsec:datasets}.

\begin{algorithm}[ht]
\caption{CMS-BWT} \label{algo:CMS-BWT}
\DontPrintSemicolon
\KwIn{reference $R$, collection $\Ch$}
\KwOut{$\BWT(\Ch)$}

compute data structures on $R$ \;
compute $\eCMS$ of $\Ch$ w.r.t.\ $R$ and count how many suffixes we have in each insert-bucket \;
sort $\eCMS$ \;
\ForEach{suffix in an insert-bucket containing $\ihead$s}{ \label{algo:for-loop}
    binary search the correct position and increase head-counter \;
} 
output the $\BWT$ of $\Ch$ by interleaving head-counters and bucket-counters \;

\end{algorithm}

\begin{figure}
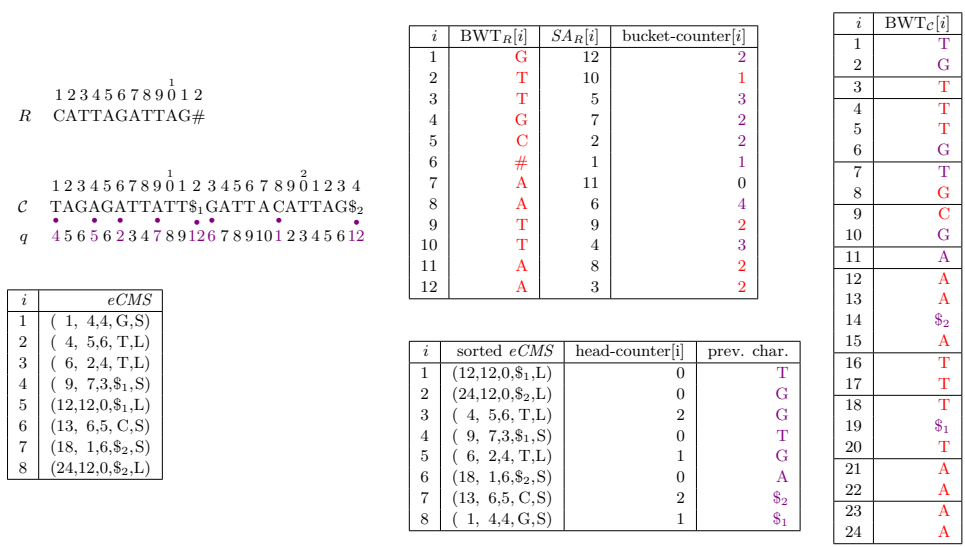

    \centering
    \resizebox{.94\textwidth}{!}{
    \begin{tabular}{lc c}
        \begin{tabular}{l}
            \begin{tabular}{rc@{}c@{}c@{}c@{}c@{}c@{}c@{}c@{}c@{}c@{}c@{}c@{}}
                    &  1 & 2 & 3 & 4 & 5 & 6 & 7 & 8 & 9 & $\overset{1}{0}$ & 1 & 2 \\
                 $R$ & C & A & T & T & A & G & A & T & T & A & G & \# \\
            \end{tabular}  \\
            \\\\
            \begin{tabular}{rc@{}c@{}c@{}c@{}c@{}c@{}c@{}c@{}c@{}c@{}c@{}c@{}c@{}c@{}c@{}c@{}c@{}c@{}c@{}c@{}c@{}c@{}c@{}c@{}}
                 & 1 & 2 & 3 & 4 & 5 & 6 & 7 & 8 & 9 & $\overset{1}{0}$ & 1 & 2 & 3 & 4 & 5 & 6 & 7 & 8 & 9 & $\overset{2}{0}$ & 1 & 2 & 3 & 4 \\
                 $\Ch$ & $\underset{\textcolor{violet}{\bullet}}{\text{T}}$ & A & G & $\underset{\textcolor{violet}{\bullet}}{\text{A}}$ & G & $\underset{\textcolor{violet}{\bullet}}{\text{A}}$ & T & T & $\underset{\textcolor{violet}{\bullet}}{\text{A}}$ & T & T & $\underset{\textcolor{violet}{\bullet}}{\$_1}$ & $\underset{\textcolor{violet}{\bullet}}{\text{G}}$ & A & T & T & A & $\underset{\textcolor{violet}{\bullet}}{\text{C}}$ & A & T & T & A & G & $\underset{\textcolor{violet}{\bullet}}{\$_2}$ \\
                 $q$ & \textcolor{violet}{4} & 5 & 6 & \textcolor{violet}{5} & 6 & \textcolor{violet}{2} & 3 & 4 & \textcolor{violet}{7} & 8 & 9 & \textcolor{violet}{12} & \textcolor{violet}{6} & 7 & 8 & 9 & 10 & \textcolor{violet}{1} & 2 & 3 & 4 & 5 & 6 & \textcolor{violet}{12} \\
            \end{tabular}  \\ 
            \\\\
            \begin{tabular}{|r|r@{}r@{}r@{}r@{}r@{}r|}
                \hline
                $i$ & \multicolumn{6}{r|}{$\eCMS$}\\
                \hline
                1 & ( & 1, & 4, & 4, & G, & S) \\
                2 & ( & 4, & 5, & 6, & T, & L) \\
                3 & ( & 6, & 2, & 4, & T, & L) \\
                4 & ( & 9, & 7, & 3, & $\$_1$, & S) \\
                5 & ( & 12, & 12, & 0, & $\$_1$, & L) \\
                6 & ( & 13, & 6, & 5, & C, & S) \\
                7 & ( & 18, & 1, & 6, & $\$_2$, & S) \\
                8 & ( & 24, & 12, & 0, & $\$_2$, & L) \\
                \hline
            \end{tabular}
        \end{tabular}

        &  
    
        \begin{tabular}{l}
            \begin{tabular}{|r|r|r|r|}
                 \hline
                 $i$ & $\BWT_R[i]$ & $\SA_R[i]$ & bucket-counter[$i$]  \\
                 \hline
                 1 & \textcolor{red}{G} & 12 & \textcolor{violet}{2} \\
                 2 & \textcolor{red}{T} & 10 & \textcolor{red}{1}\\
                 3 & \textcolor{red}{T} & 5 & \textcolor{violet}{3}\\
                 4 & \textcolor{red}{G} & 7 & \textcolor{violet}{2}\\
                 5 & \textcolor{red}{C} & 2 & \textcolor{violet}{2}\\
                 6 & \textcolor{red}{\#} & 1 & \textcolor{violet}{1}\\
                 7 & \textcolor{red}{A} & 11 & 0\\
                 8 & \textcolor{red}{A} & 6 & \textcolor{violet}{4}\\
                 9 & \textcolor{red}{T} & 9 & \textcolor{red}{2}\\
                 10 & \textcolor{red}{T} & 4 & \textcolor{violet}{3}\\
                 11 & \textcolor{red}{A} & 8 & \textcolor{red}{2}\\
                 12 & \textcolor{red}{A} & 3 & \textcolor{red}{2}\\
                 \hline
            \end{tabular} \\
            \\\\
            \begin{tabular}{|r|r@{}r@{}r@{}r@{}r@{}r|r|r|}
                \hline
                $i$ & \multicolumn{6}{r|}{sorted $\eCMS$} & head-counter[i] & prev. char. \\
                \hline
                1 & ( & 12, & 12, & 0, & $\$_1$, & L) & 0 & \textcolor{violet}{T}\\
                2 & ( & 24, & 12, & 0, & $\$_2$, & L) & 0 & \textcolor{violet}{G}\\
                3 & ( & 4, & 5, & 6, & T, & L) & 2 & \textcolor{violet}{G}\\
                4 & ( & 9, & 7, & 3, & $\$_1$, & S) & 0 & \textcolor{violet}{T}\\
                5 & ( & 6, & 2, & 4, & T, & L) & 1 & \textcolor{violet}{G}\\
                6 & ( & 18, & 1, & 6, & $\$_2$, & S) & 0 & \textcolor{violet}{A}\\
                7 & ( & 13, & 6, & 5, & C, & S) & 2 & \textcolor{violet}{$\$_2$}\\
                8 & ( & 1, & 4, & 4, & G, & S) & 1 & \textcolor{violet}{$\$_1$}\\
                \hline
            \end{tabular}
        \end{tabular}
    
        & 
         \begin{tabular}{|r|r|}
             \hline
             $i$ & $\BWT_\Ch[i]$ \\
             \hline
             1 & \textcolor{violet}{T} \\
             2 & \textcolor{violet}{G}\\
             \hline
             3 & \textcolor{red}{T}\\
             \hline
             4 & \textcolor{red}{T}\\
             5 & \textcolor{red}{T}\\
             6 & \textcolor{violet}{G}\\
             \hline
             7 & \textcolor{violet}{T}\\
             8 & \textcolor{red}{G}\\
             \hline
             9 & \textcolor{red}{C}\\
             10 & \textcolor{violet}{G}\\
             \hline
             11 & \textcolor{violet}{A}\\
             \hline
             12 & \textcolor{red}{A}\\
             13 & \textcolor{red}{A}\\
             14 & \textcolor{violet}{$\$_2$}\\
             15 & \textcolor{red}{A}\\
             \hline
             16 & \textcolor{red}{T}\\
             17 & \textcolor{red}{T}\\
             \hline
             18 & \textcolor{red}{T}\\
             19 & \textcolor{violet}{$\$_1$}\\
             20 & \textcolor{red}{T}\\
             \hline
             21 & \textcolor{red}{A}\\
             22 & \textcolor{red}{A}\\
             \hline
             23 & \textcolor{red}{A}\\
             24 & \textcolor{red}{A}\\
             \hline
        \end{tabular}   
        \\
    \end{tabular}
    }
    
    \caption{Example of the construction of $\BWT(\Ch)$. The colour purple is used to indicate a relationship with $\ihead$s, whereas red is used to indicate a relationship with $R$ and $\BWT_R$. 
    On the left, under $\Ch$ we mark $\ihead$s with a purple circle and highlight with the same colour $q_j$ when $j$ is an $\ihead$. 
    In the middle part of the figure, entries of bucket-counter highlighted in red contains a positive number, meaning that no $\ihead$ is contained within that bucket. On the other hand, entries coloured in purple tell us that we have at least one $\ihead$ in that bucket.
    On the right, we show the full $\BWT$ of $\Ch$, where we use the same colour code. We also show with horizontal lines insert-buckets, highlighting how we interleaved information from bucket-counters and head-counters.
    }
    \label{fig:example-cms-bwt}
\end{figure}

\section{Implementation details}\label{sec:implementation}
The algorithm starts by first augmenting the reference with characters that occur in $\Ch$ but not in $R$ so that we have a well-defined insert point for each suffix of the collection.

Then, we use \libsais~\cite{libsais} to build $\SA_{R}$ and $\LCP_{R}$. We chose this tool because it has been experimentally proven to be one of the fastest tools for general-purpose suffix array construction. For the $\LCP$ array it uses the $\Phi$ method~\cite{KMP09}. The data structure for $\PSV$-$\NSV$ queries on the $\LCP_{R}$ is based on the work of C{\'a}novas and Navarro~\cite{CN10}. 

For sorting the $\eCMS$, we first rename each $\ihead$ with a metacharacter based on the rank of the partial lexicographic order of the substrings associated with each $\ihead$. Then, by rearranging these metacharacters in text order we use again \libsais\ to compute the suffix array of this metacharacter string. 

When profiling the implementation, we found that the number of distinct $\ihead$s, i.e.\ the number of different tuples in $K$, grows even slower than the total set. For example, looking at the dataset consisting of 330 copies of Human Chromosome 19 described in Section~\ref{subsec:datasets}, we have only 4,355,600 unique $\ihead$s versus $\kappa = 174,532,868$. Moreover, when performing binary search comparisons, more than 60\% of the total number of comparisons were resolved by comparing the length plus $x_i$ information stored in the $\eCMS$. Combining these two insights led us to another heuristic based on a two-layered binary search. First, we compare the length information $\ell_i = \ell_{k_i} - (i - j)$ along with $x_{k_i}$ of a suffix $i$ with $\ihead$ $\pred_{K}(i) = k_i$ starting at position $j$ with the set of unique $\ihead$s of its insert-bucket.
Then, if the pair $\ell_i$ and $x_{k_i}$ is different from any other $\ihead$ we increment the counter for the $\ihead$ pointed to by this first binary search. Otherwise, we have to refine the search by comparing $\Suf_i$ with the whole set of $\ihead$s having $\ip(i)$, $\ell = \ell_i$ and $x_{k_i}$. This technique led to a speedup in the binary search phase of between 10\% and 20\%.

To avoid continuous cache misses due to loading different subsets of $\ihead$s with different insert points during binary searching, we put a number of suffixes in a buffer divided into insert-buckets. After the buffer is at its full capacity, we proceed to process in bulk suffixes in the same insert-bucket, easing the loading in cache of subsets of $\ihead$s. For all of our experiments, we set this buffer to 2GB, but it can be arbitrarily chosen by the user.

Lastly, we also implemented a variant of \CMSBWT\ trading off space for running time. This was achieved by writing to disk some of the data structures involved in different phases of the algorithm. This version saves roughly a third of the space used by the non-memory-saving implementation.

\section{Experiments}\label{sec:experiments}
We implemented our algorithm for computing the $\BWT$ in C++. Our implementation, \CMSBWT, is available at~\url{https://github.com/fmasillo/CMS-BWT}. The experiments were conducted on a desktop equipped with 64GB of RAM DDR4-3200MHz and an Intel(R) Core(R) i9-11900 @ 2.50GHz (with turbo speed @ 5GHz) with 16 MB of cache. The operating system was Ubuntu 22.04 LTS, the compiler used was {\tt g++} version 11.3.0 with options {\tt -std=c++20 -O3 -funroll-loops -march=native} enabled.

\subsection{Tools compared}\label{subsec:tools}
We compared two different implementations of \CMSBWT\ (simple and memory-saving) to the following four tools:
\begin{enumerate}
    \item \bigBWT~\cite{BoucherGKLMM19}, a tool computing both the $\BWT$ and the suffix array. It is specifically made for highly repetitive data. We used the default parameters ({\tt -w = 10}, {\tt -p = 100}) and the {\tt -f} flag to parse fasta files as input. This tool outputs the whole $\BWT$.
    \item \rPFP~\cite{Oliva2023}, is a tool improving on plain PFP. It has been shown to be both faster and to use less space than \bigBWT\ for big enough dataset sizes. We run the experiments using {\tt -}{\tt -bwt-only -}{\tt -w1 10 -}{\tt-w2 5} as flags. This tool outputs the run-length encoded $\BWT$.
    \item \grlBWT~\cite{Diaz-DominguezN22}, a tool computing the BCR $\BWT$~\cite{BauerCR13} again with a focus on highly repetitive data. We used the default parameters. This tool outputs the run-length encoded $\BWT$. 
    \item \ropeBWT~\cite{Li14a}, a highly optimized tool to compute the BCR $\BWT$ on DNA data. We used the flag {\tt -R} to skip the reverse complement. We also compare the effect of adding the {\tt -P} flag, which limits the software to execute in single-threaded mode at all times. This tool outputs the whole $\BWT$.
\end{enumerate}

\subsection{Datasets}\label{subsec:datasets}
In our experiments, we used two publicly available datasets. The first dataset, called {\tt chr19} contains copies of the Human Chromosome 19 from the 1000 Genomes Project~\cite{1000genomes}. The second dataset, named {\tt sars-cov2}, consists of copies of SARS-CoV2 genomes taken from COVID-19 Data Portal~\footnote{We used the following command to download in bulk the data using the CDP File Downloader: {\tt java -jar cdp-file-downloader.jar -}{\tt  -domain=VIRAL\_SEQUENCES -}{\tt  -datatype=SEQUENCES -}{\tt  -format=FASTA -}{\tt  -location=/home/data/ -}{\tt  -email=xxx@xxx.xx -}{\tt  -protocol=FTP}}. Some additional metadata can be found in Table~\ref{tab:datasets}.

The total size of both datasets is 60GB. We took increasing prefixes of size 1GB, 10GB, 20GB, 40GB, and 60GB.

For example, looking at 20GB of {\tt chr19} data, we have around 330 copies of Human Chromosome 19. On this prefix of {\tt chr19}, we have $\ihead$s only in 6\% of the buckets. This means that around 94\% of the suffixes will not be compared against any $\ihead$, speeding up the whole process.

\begin{table}[th]
    \centering
    \begin{tabular}{|l|l|r|r|r|r|}
         \hline
         Name & Description & $\sigma$ & $r$ & no.\ of i-heads & no.\ unique i-heads \\
         & & & (20 GB) & (20 GB) & (20 GB) \\
         \hline
         {\tt chr19} & Human Chromosome 19 & 5 & 36\,723\,404 & 174\,532\,868 & 4\,355\,600 \\
         {\tt sars-cov2} & SARS-CoV2 genome & 14 & 19\,075\,277 & 253\,188\,521 & 1\,466\,183 \\
         \hline
    \end{tabular}
    \caption{Datasets used in experiments. In column 3, we specify the alphabet size $\sigma$, in column 4 the number $r$ of runs of the BWT, in column 5 the number of insert-heads, and in column 6 the number of unique insert-heads. In our experiments, we use prefixes of each dataset up to 60GB. The last three columns refer to the 20GB prefix. }
    \label{tab:datasets}
\end{table}

\subsection{Results}
As already pointed out in Section~\ref{subsec:tools}, the output of the tools can either be the full $\BWT$ or the run-length encoded $\BWT$. This can be a non-negligible time overhead. Therefore, when comparing to tools that output the whole $\BWT$ we will also write this version of the $\BWT$ to disk. On the other hand, when comparing \CMSBWT\ to \rPFP\ and \grlBWT\ we will write to disk the run-length encoded $\BWT$.

In Figures~\ref{fig:chr19-time-full}, \ref{fig:sars-cov2-time-full}, \ref{fig:chr19-time-rl} and~\ref{fig:sars-cov2-time-rl}, we report the comparison of the running time of the five tools divided by dataset and output type.  

On the {\tt chr19} dataset, we are always the fastest tool compared to other tools that output the uncompressed $\BWT$. More specifically, comparing the non-memory-saving implementation at 60GB of data, we are 57\% faster than \bigBWT\ and 10 times faster than \ropeBWT\ with and without {\tt -P}. Compared to the tools that output the run-length encoded $\BWT$, our fastest implementation is always the winner, while at 60GB, our memory-saving implementation takes 12\% more time than \rPFP. Both implementations outperform \grlBWT, e.g.\ at 60GB of data they take a fourth of the time. 

On the {\tt sars-cov2} dataset we are always the fastest tool in both settings. For example, at 60GB of data, we are faster than: \bigBWT\ by 17\%, \ropeBWT\ with {\tt -P} by 114\%, \ropeBWT\ with no {\tt -P} by 35\%, \rPFP\ by 445\% and \grlBWT\ by 53\%.

\begin{figure}
    \centering
    \begin{subfigure}{.49\textwidth}
        \centering
        \includegraphics[width=\textwidth]{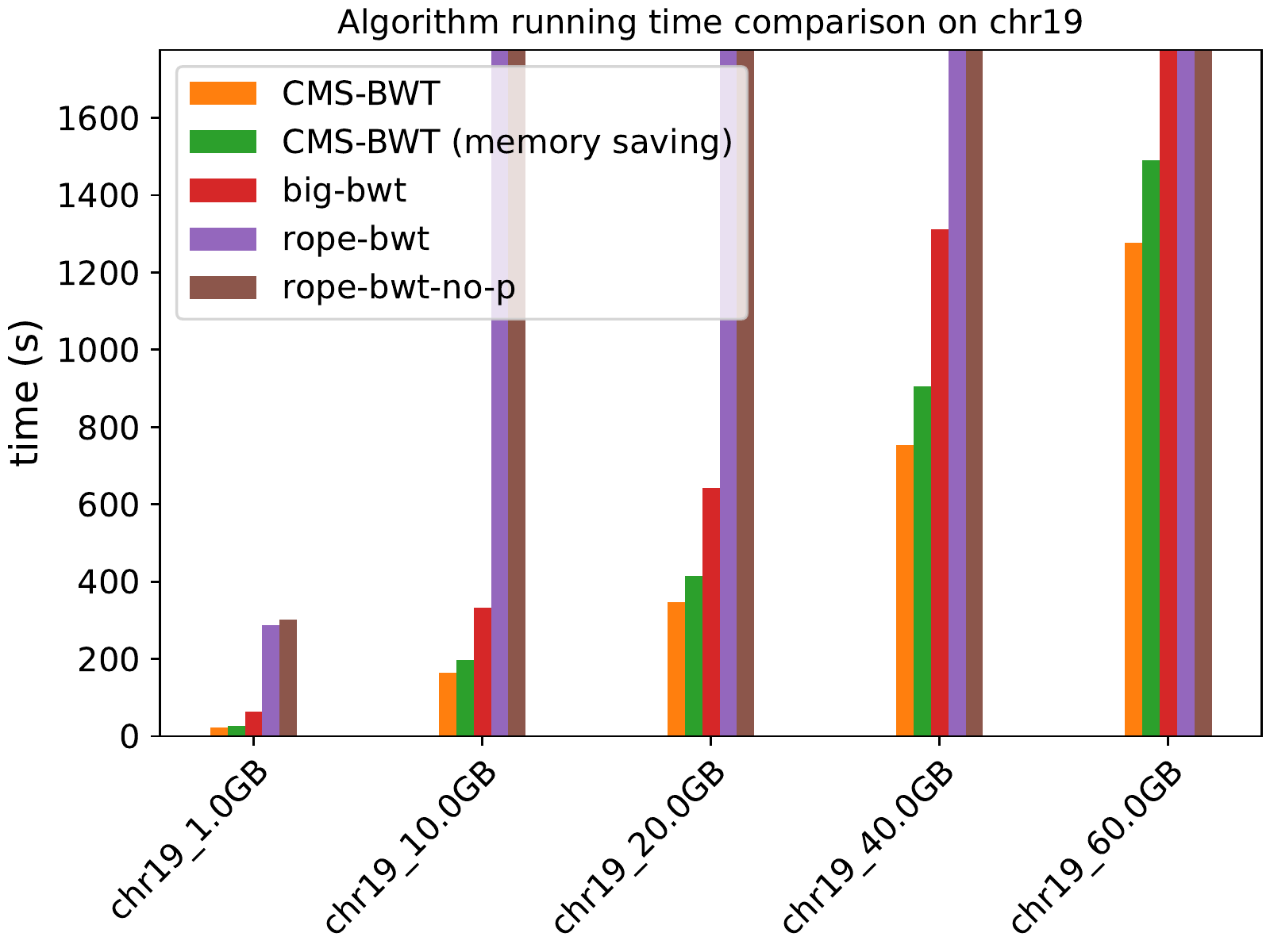}
        \caption{Running time comparison on different subsets of copies of Chromosome 19. These tools output the full $\BWT$.}
        \label{fig:chr19-time-full}
    \end{subfigure}
    \begin{subfigure}{.49\textwidth}
        \centering
        \includegraphics[width=\textwidth]{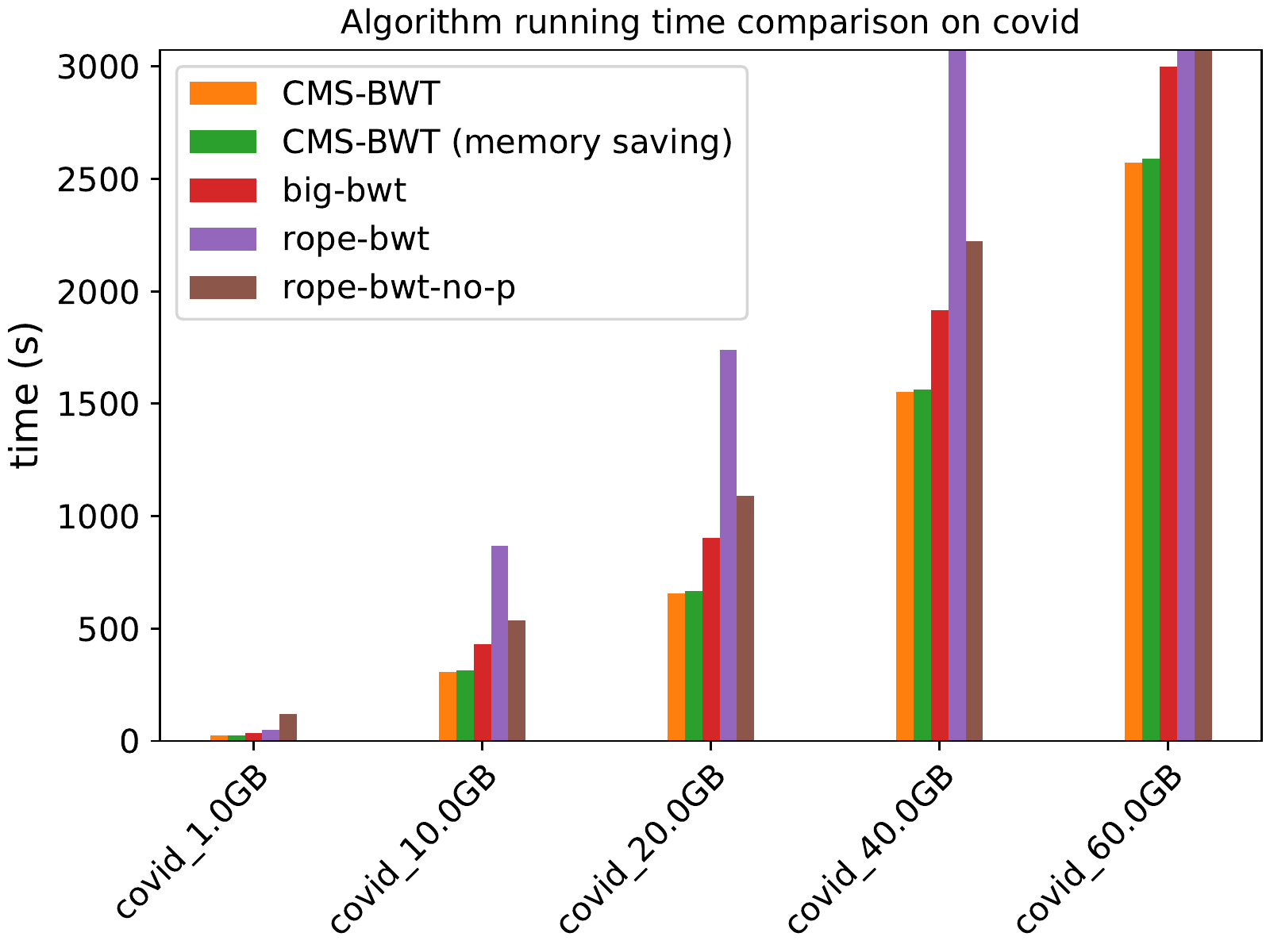}
        \caption{Running time comparison on different subsets of copies of SARS-CoV2 genomes. These tools output the full $\BWT$.}
        \label{fig:sars-cov2-time-full}
    \end{subfigure}

\end{figure}

\begin{figure}[ht]
    \centering
    \begin{subfigure}{.49\textwidth}
        \centering
        \includegraphics[width=\textwidth]{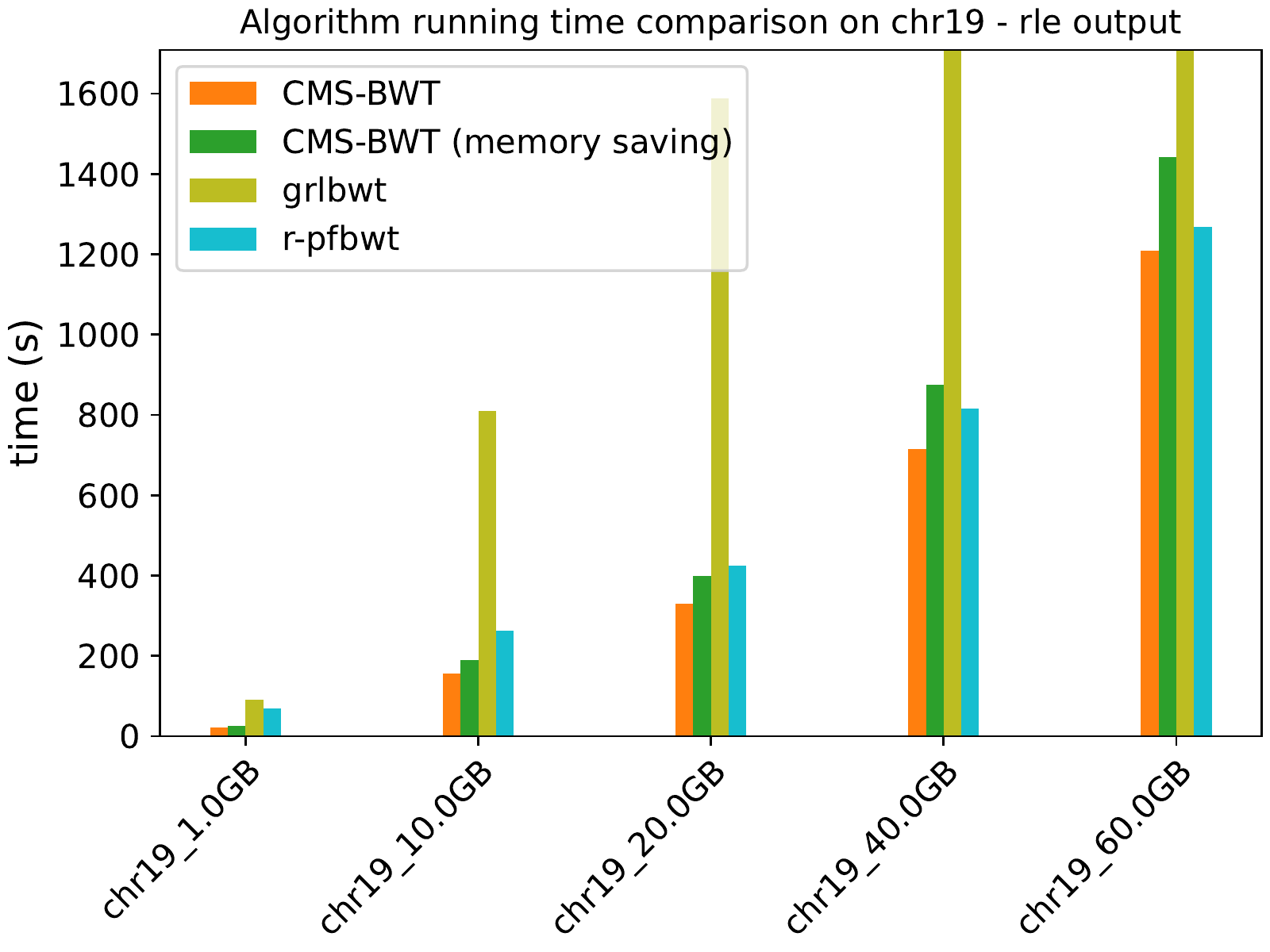}
        \caption{Running time comparison on different subsets of copies of Chromosome 19.}
        \label{fig:chr19-time-rl}
    \end{subfigure}
    \begin{subfigure}{.49\textwidth}
        \centering
        \includegraphics[width=\textwidth]{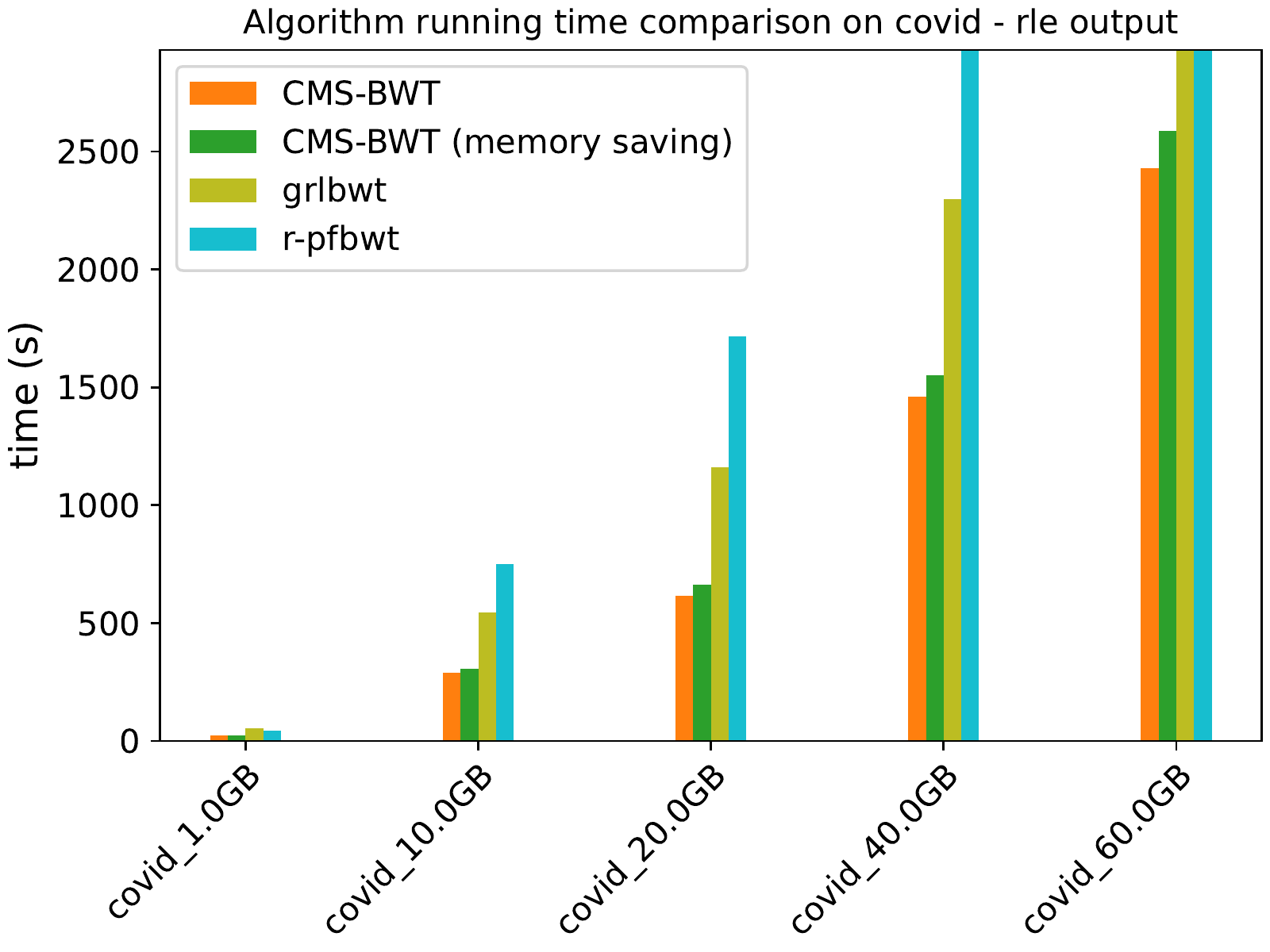}
        \caption{Running time comparison on different subsets of copies of SARS-CoV2 genomes.}
        \label{fig:sars-cov2-time-rl}
    \end{subfigure}
    \caption{Running time comparison of tools outputting the run-length encoded $\BWT$.}
\end{figure}

In Figure~\ref{fig:chr19-memory} and~\ref{fig:sars-cov2-memory} we show the memory footprint of the five tools. As one can notice, our tool has the highest memory requirement among all the other tools. However, it can be noted that the memory-saving variant of \CMSBWT\ on bigger sizes of both datasets requires always less than half of the input size in space.

\begin{figure}[ht]
    \centering
    \begin{subfigure}{.49\textwidth}
        \includegraphics[width=\textwidth]{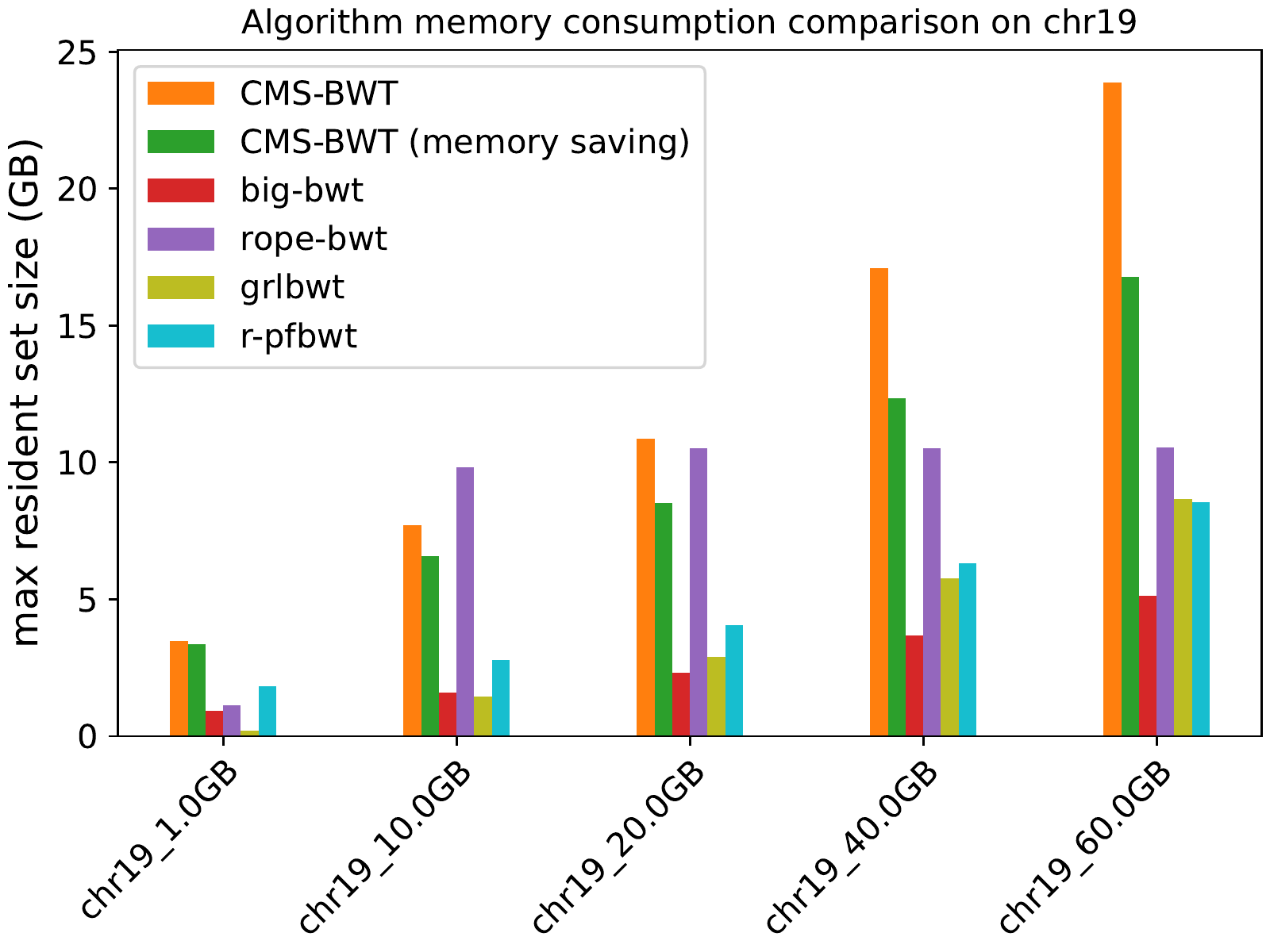}
        \caption{Comparison of tools on different subsets of copies of Chromosome 19.}
        \label{fig:chr19-memory}
    \end{subfigure}
    \begin{subfigure}{.49\textwidth}
        \includegraphics[width=\textwidth]{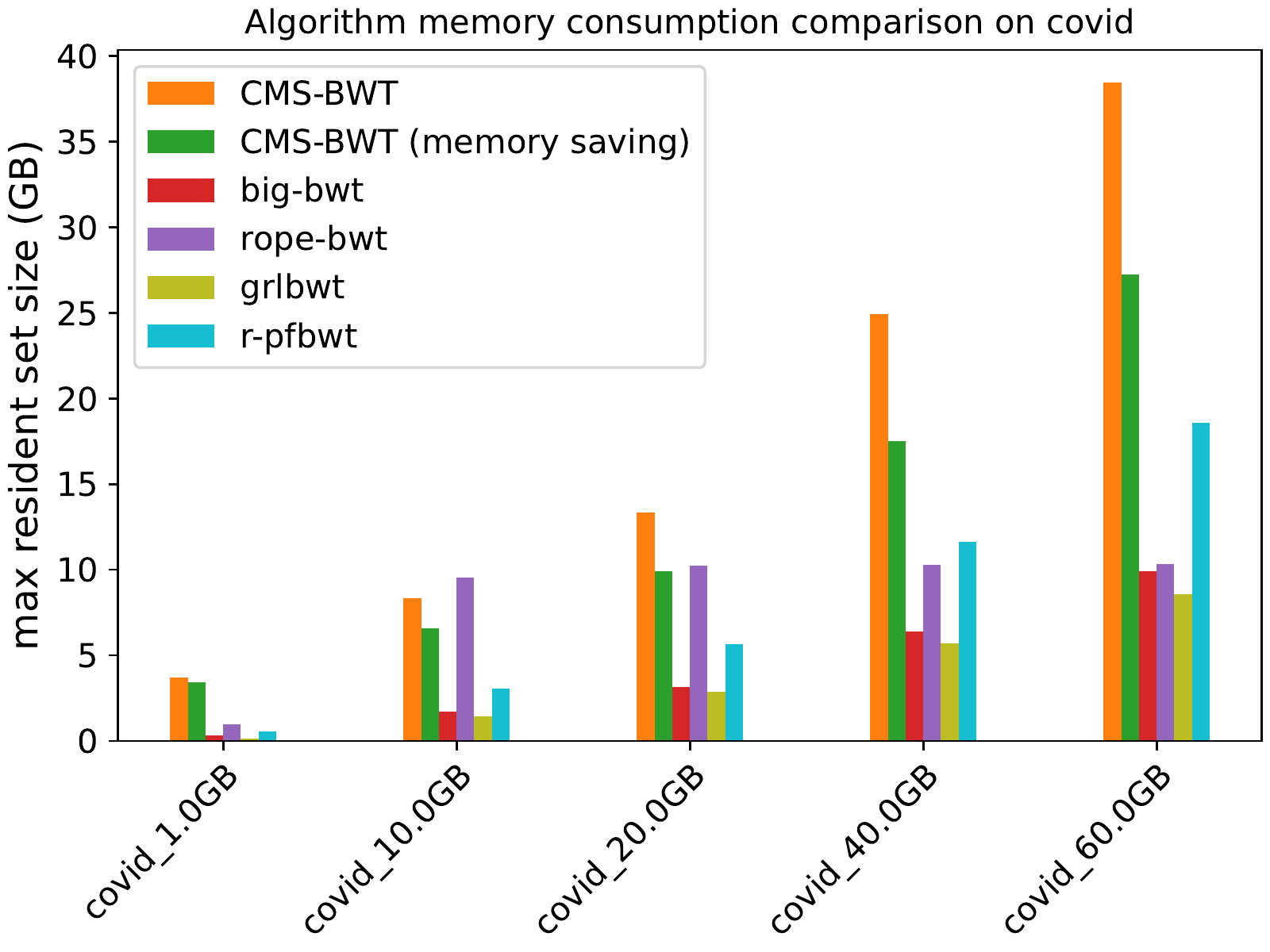}
        \caption{Comparison of tools on different subsets of SARS-CoV2 genomes.}
        \label{fig:sars-cov2-memory}
    \end{subfigure}
    \caption{Peak memory measured as maximum resident set size in GB.}
\end{figure}

\section{Conclusions}\label{sec:conclusions}
We presented a new algorithm for constructing the $\BWT$ of a collection of highly similar strings in compressed space. An experimental evaluation of two different implementations shows that our algorithm is competitive with state-of-the-art tools. Most of the time, both implementations outperform the other tools in terms of running time, but they are also the heaviest w.r.t.\ space consumption.

Future work will focus on parallelizing the implementation to allow taking advantage of multicore CPUs that are widespread nowadays. It is fairly straightforward to assign distinct sequences to a pool of multiple threads to compute the matching statistics. Another phase that would directly benefit from multi-threading is the for-loop at line~\ref{algo:for-loop} in Algorithm~\ref{algo:CMS-BWT}. With careful handling of locks for each head-counter, this is easily parallelizable, dividing the for-loop into equal parts.

We are also working on extending our algorithm to incorporate the computation of $\SA$-samples. This will allow us to build the $r$-index. With careful implementation, our tool can be extended to compute $\SA$-samples along with bucket- and head-counters, without changing both time and space bounds given in Proposition~\ref{prop:CMS-BWT}.

\bibliography{refs}

\begin{thebibliography}{10}

\bibitem{Baier16}
Uwe Baier.
\newblock Linear-time suffix sorting - {A} new approach for suffix array
  construction.
\newblock In {\em Proc. of the 27th Annual Symposium on Combinatorial Pattern
  Matching ({CPM} 2016)}, volume~54 of {\em LIPIcs}, pages 23:1--23:12. Schloss
  Dagstuhl - Leibniz-Zentrum f{\"{u}}r Informatik, 2016.

\bibitem{BauerCR13}
Markus~J. Bauer, Anthony~J. Cox, and Giovanna Rosone.
\newblock Lightweight algorithms for constructing and inverting the {BWT} of
  string collections.
\newblock {\em Theor. Comput. Sci.}, 483:134--148, 2013.

\bibitem{BoucherGKLMM19}
Christina Boucher, Travis Gagie, Alan Kuhnle, Ben Langmead, Giovanni Manzini,
  and Taher Mun.
\newblock Prefix-free parsing for building big bwts.
\newblock {\em Algorithms Mol. Biol.}, 14(1):13:1--13:15, 2019.

\bibitem{BurrowsWheeler94}
Michael Burrows and David~J. Wheeler.
\newblock A block-sorting lossless data compression algorithm.
\newblock Technical report, DIGITAL System Research Center, 1994.

\bibitem{CN10}
Rodrigo C{\'{a}}novas and Gonzalo Navarro.
\newblock Practical compressed suffix trees.
\newblock In {\em Proc. of the 9th International Symposium Experimental
  Algorithms, {SEA} 2010}, volume 6049 of {\em LNCS}, pages 94--105. Springer,
  2010.

\bibitem{ChangL94}
William~I. Chang and Eugene~L. Lawler.
\newblock Sublinear approximate string matching and biological applications.
\newblock {\em Algorithmica}, 12(4/5):327--344, 1994.

\bibitem{Diaz-DominguezN22}
Diego D{\'{\i}}az{-}Dom{\'{\i}}nguez and Gonzalo Navarro.
\newblock Efficient construction of the {BWT} for repetitive text using string
  compression.
\newblock In {\em Proc. of 33rd Annual Symposium on Combinatorial Pattern
  Matching ({CPM} 2022)}, volume 223 of {\em LIPIcs}, pages 29:1--29:18.
  Schloss Dagstuhl - Leibniz-Zentrum f{\"{u}}r Informatik, 2022.

\bibitem{FerraginaM00}
Paolo Ferragina and Giovanni Manzini.
\newblock Opportunistic data structures with applications.
\newblock In {\em Proc. of the 41st Annual Symposium on Foundations of Computer
  Science ({FOCS} 2000)}, pages 390--398. {IEEE} Computer Society, 2000.

\bibitem{Fischer11}
Johannes Fischer.
\newblock Combined data structure for previous- and next-smaller-values.
\newblock {\em Theor. Comput. Sci.}, 412(22):2451--2456, 2011.

\bibitem{GagieNP20}
Travis Gagie, Gonzalo Navarro, and Nicola Prezza.
\newblock Fully functional suffix trees and optimal text searching in
  {BWT}-runs bounded space.
\newblock {\em J. {ACM}}, 67(1):2:1--2:54, 2020.

\bibitem{GiulianiR022}
Sara Giuliani, Giuseppe Romana, and Massimiliano Rossi.
\newblock Computing maximal unique matches with the r-index.
\newblock In {\em Proc. of the 20th International Symposium on Experimental
  Algorithms ({SEA} 2022)}, volume 233 of {\em LIPIcs}, pages 22:1--22:16.
  Schloss Dagstuhl - Leibniz-Zentrum f{\"{u}}r Informatik, 2022.

\bibitem{Goto19}
Keisuke Goto.
\newblock Optimal time and space construction of suffix arrays and {LCP} arrays
  for integer alphabets.
\newblock In {\em Proc. of the Prague Stringology Conference 2019}, pages
  111--125. Czech Technical University in Prague, Faculty of Information
  Technology, Department of Theoretical Computer Science, 2019.

\bibitem{libsais}
Ilya Grebnov.
\newblock Code for libsais.
\newblock \url{https://github.com/IlyaGrebnov/libsais}.

\bibitem{KMP09}
Juha K{\"{a}}rkk{\"{a}}inen, Giovanni Manzini, and Simon~J. Puglisi.
\newblock Permuted longest-common-prefix array.
\newblock In {\em Proc. of the 20th Annual Symposium on Combinatorial Pattern
  Matching ({CPM} 2009)}, volume 5577 of {\em LNCS}, pages 181--192. Springer,
  2009.

\bibitem{LangmeadS12}
Ben Langmead and Steven~L Salzberg.
\newblock Fast gapped-read alignment with {Bowtie 2}.
\newblock {\em Nature methods}, 9(4):357--359, 2012.

\bibitem{Li14a}
Heng Li.
\newblock Fast construction of {FM}-index for long sequence reads.
\newblock {\em Bioinform.}, 30(22):3274--3275, 2014.

\bibitem{LiD10}
Heng Li and Richard Durbin.
\newblock Fast and accurate long-read alignment with {Burrows-Wheeler}
  transform.
\newblock {\em Bioinform.}, 26(5):589--595, 2010.

\bibitem{LiLH22}
Zhize Li, Jian Li, and Hongwei Huo.
\newblock Optimal in-place suffix sorting.
\newblock {\em Inf. Comput.}, 285(Part):104818, 2022.

\bibitem{LiptakMP22}
{\relax Zs}uzsanna Lipt{\'{a}}k, Francesco Masillo, and Simon~J. Puglisi.
\newblock Suffix sorting via matching statistics.
\newblock In {\em Proc. of the 22nd International Workshop on Algorithms in
  Bioinformatics ({WABI} 2022)}, volume 242 of {\em LIPIcs}, pages 20:1--20:15.
  Schloss Dagstuhl - Leibniz-Zentrum f{\"{u}}r Informatik, 2022.

\bibitem{MarcaisDPCSZ18}
Guillaume Mar{\c{c}}ais, Arthur~L. Delcher, Adam~M. Phillippy, Rachel Coston,
  Steven~L. Salzberg, and Aleksey~V. Zimin.
\newblock Mummer4: {A} fast and versatile genome alignment system.
\newblock {\em PLoS Comput. Biol.}, 14(1), 2018.

\bibitem{Nong13}
Ge~Nong.
\newblock Practical linear-time \emph{O}(1)-workspace suffix sorting for
  constant alphabets.
\newblock {\em {ACM} Trans. Inf. Syst.}, 31(3):15, 2013.

\bibitem{NongZC11}
Ge~Nong, Sen Zhang, and Wai~Hong Chan.
\newblock Two efficient algorithms for linear time suffix array construction.
\newblock {\em {IEEE} Trans. Computers}, 60(10):1471--1484, 2011.

\bibitem{Oliva2023}
Marco Oliva, Travis Gagie, and Christina Boucher.
\newblock Recursive prefix-free parsing for building big bwts.
\newblock {\em bioRxiv}, pages 2023--01, 2023.

\bibitem{RossiOLGB22}
Massimiliano Rossi, Marco Oliva, Ben Langmead, Travis Gagie, and Christina
  Boucher.
\newblock {MONI:} {A} pangenomic index for finding maximal exact matches.
\newblock {\em J. Comput. Biol.}, 29(2):169--187, 2022.

\bibitem{1000genomes}
{The 1000 Genomes Project Consortium}.
\newblock A global reference for human genetic variation.
\newblock {\em Nature}, 526:68--74, 2015.

\bibitem{Willard83}
Dan~E. Willard.
\newblock Log-logarithmic worst-case range queries are possible in space
  {$\Theta(N)$}.
\newblock {\em Inf. Process. Lett.}, 17(2):81--84, 1983.

\end{thebibliography}

\end{document}